\def\desclabel#1{\bf #1\hfil} \def\desc{\list{}{%
    \setlength{\leftmargin}{0pt} \labelwidth= \leftmargin \advance
    \labelwidth by -\labelsep \let \makelabel=\desclabel}}
\def\descHACKlabel#1{\bf #1\hfil} \def\descHACK{\list{}{%
    \setlength{\leftmargin}{0pt} \labelwidth= \leftmargin \advance
    \labelwidth by -\labelsep \let \makelabel=\descHACKlabel}}
\newcounter{extremeleftlistcounter} %
{\begin{list}{\arabic{extremeleftlistcounter}~~~}{\usecounter{extremeleftlistcounter}%
      \setlength{\labelsep}{0pt}\setlength{\leftmargin}{0pt}%
      \setlength{\labelwidth}{0pt}\setlength{\listparindent}{0pt}}}%
  {\end{list}}
\newcounter{leftlistcounter} %
{\begin{list}{\arabic{leftlistcounter}~~~}{\usecounter{leftlistcounter}%
      \setlength{\labelsep}{0pt}\setlength{\leftmargin}{15pt}%
      \setlength{\labelwidth}{15pt}\setlength{\listparindent}{0pt}}}%
  {\end{list}}
\newcommand{\ppznote}[1]{} \newcommand{\anznote}[1]{}
\newcommand{\lhznote}[1]{} \newcommand{\lhzselfnote}[1]{}
\newcommand{\mjznote}[1]{}
\newcommand{\zfootnote}[1]{}
    \newlength{\filength}
\newsavebox{\gcbox}
\sbox{\gcbox}{\framebox[\filength]{\rule{0ex}{2ex}}}
\newtheorem{theorem}{Theorem}[section]
\newtheorem{corollary}[theorem]{Corollary}
\newcommand\qedblob{\ding{113}} \def\literalqed{{\
    \nolinebreak\hfill\mbox{\qedblob\quad}}}
\newtheorem{lemma}[theorem]{Lemma}
\newtheorem{fact}[theorem]{Fact}
\newtheorem{proposition}[theorem]{Proposition}
\newtheorem{definition}[theorem]{Definition}
\def\mmmddyyyy{\ifcase\month\or Jan.\or Feb.\or Mar.\or Apr.\or
  May.\or June\or Jul.\or Aug.\or Sep.\or Oct.\or Nov.\or
  Dec.\fi\space\number\day, \number\year} \def\hhmm{\ifnum\hour<10
  0\fi\number\hour :%
  \ifnum\minutes<10 0\fi\number\minutes}
\newcommand{\naturalnumber}{\ensuremath{{ \mathbb{N} }}}
\newcommand{\realnumber}{\ensuremath{{ \mathbb{R} }}}
\newcommand{\naturalnumberpositive}{\ensuremath{{ \mathbb{N}^+ }}}
\newcommand{\realnumberpositive}{\ensuremath{{ \mathbb{R}^+ }}}
\newcommand{\realnumberatleastone}{\ensuremath{{ \mathbb{R}^{\geq 1}}}}
\newcommand{\naturalfromtwo}{\ensuremath{{ \mathbb{N}^{\geq 2} }}}
\makeatother \newcommand{\sharpp}{{\rm \#P}}
\newcommand{\acc}{\ensuremath{{\text{\rm \#acc}}}}
\newcommand{\parityp}{{\rm \oplus P}} \newcommand{\up}{{\rm UP}}
\newcommand{\upleq}[1]{\ensuremath{{\rm UP}_{{\leq}{#1}}}}
\newcommand{\fewp}{{\rm FewP}} \newcommand{\coup}{{\rm coUP}}
\newcommand{\p}{{\rm P}} 
\newcommand{\primes}{{\rm PRIMES}} \newcommand{\mersenneprimes}{{\rm
    MersennePRIMES}} \newcommand{\composites}{{\rm COMPOSITES}}
\newcommand{\np}{{\rm NP}}
\newcommand{\conp}{{\rm coNP}}
\newcommand{\few}{{\ensuremath{\rm Few}}}
\DeclareMathSymbol{\subsetneq}{\mathbin}{AMSb}{"28}
\DeclareMathSymbol{\supsetneq}{\mathbin}{AMSb}{"29}
\DeclareMathOperator{\slog}{slog}
\newcommand{\card}[1]{{ \mathopen\parallel {#1} \mathclose\parallel }}
\newcommand{\ceiling}[1]{{{\lceil {#1} \rceil}}}
\newcommand{\floor}[1]{{{\lfloor {#1} \rfloor}}}
\newcommand{\sigmastar}{\ensuremath{\Sigma^\ast}}
\newcommand{\calf}{\ensuremath{{\cal F}}}
\newcommand{\bigo}{{\protect\cal O}} \newcommand{\bigoh}{{\protect\cal
    O}}
\newcommand{\condition}{\,\mid \:}
\def\land{{\; \wedge \;}} \def\lor{{\; \vee \;}}
 \newcommand{\spp}{\mbox{\rm SPP}}
\newcommand{\rcnb}[1]{{{\rm RC}_{{#1}}}} \newcommand{\rcb}[1]{{{\rm
      RC}_{\{{#1}\}}}}
\title{%
  Gaps, Ambiguity, and Establishing
  Complexity-Class
  Containments via Iterative Constant-Setting\thanks{Work supported in part by NSF grant CCF-2006496.
Work done in part while Mandar Juvekar, Arian Nadjimzadah, and
Patrick A. Phillips were at the
University of Rochester's Department of
Computer Science.  A preliminary version of this paper
appeared in the
proceedings of the
47th International Symposium on
Mathematical Foundations of Computer Science
(MFCS~2022)~\cite{hem-juv-nad-phi:c:ics}.}}
\author{Lane A. Hemaspaandra\\Department of Computer Science\\University of Rochester\\ Rochester, NY 14627, USA
  \and
  Mandar Juvekar\\Department of Computer Science\\Boston University\\Boston, MA 02215, USA
  \and
  Arian
  Nadjimzadah\\Department of Mathematics\\UCLA\\Los Angeles, CA 90095, USA
\and Patrick A.
  Phillips%
  \\Riverside Research\\Arlington, VA 22202, USA}
\date{September 29, 2021; revised February 9, 2024}
\begin{document}
\sloppy

\maketitle

\begin{abstract}
  Cai and Hemachandra used iterative constant-setting to prove that
  $\few \subseteq \parityp$ (and thus that
  $\fewp \subseteq \parityp$). In this paper, we note that there is a
  tension between the nondeterministic ambiguity of the class one is
  seeking to capture, and the
  density (or, to be more precise, the needed ``nongappy''-ness) of
  the easy-to-find ``targets'' used in iterative constant-setting. In
  particular, we show that even less restrictive gap-size upper bounds
  regarding the targets allow one to capture
  ambiguity-limited classes.
  Through a flexible, metatheorem-based approach, we do so for a wide
  range of classes
  including the logarithmic-ambiguity version of Valiant's unambiguous
  nondeterminism class $\up$.
  Our work lowers the bar for what advances regarding the existence of
  infinite, P-printable sets of primes would suffice to show that
  restricted counting classes based on the primes have the power to
  accept superconstant-ambiguity analogues of $\up$.
  As an application of our work, we prove that
the Lenstra--Pomerance--Wagstaff Conjecture implies
  that
all
  $(\bigo(1) + \log\log n)$-ambiguity
  NP sets
  are
in the restricted counting class $\rcnb{\primes}$.

\medskip

\noindent
{\bf Keywords}:
Structural complexity theory,
computational complexity theory,
 ambiguity-limited NP,
restricted counting classes,
P-printable sets.
\end{abstract}

\vfill

\section{Introduction}\label{s:intro}
We show that every NP set of low ambiguity
belongs to
broad collections of restricted counting classes.

We now describe the two
types of complexity classes
just mentioned.  For any set $S \subseteq \naturalnumberpositive$, the
restricted counting class
$\rcnb{S}$~\cite{bor-hem-rot:j:powers-of-two} is defined by
$\rcnb{S} = \{L \condition (\exists f \in \sharpp)(\forall
x\in\sigmastar) [(x\not\in L \implies f(x) = 0) \land (x\in L \implies
f(x) \in S)] \}$,
where $\sharpp$ is Valiant's~\cite{val:j:permanent} counting version of $\np$ (see Sec.~\ref{s:defs}).
That is, a set $L$
is in $\rcnb{S}$ exactly if
there is a nondeterministic polynomial-time Turing machine (NPTM) that
on each string not in $L$ has zero accepting paths and on each string
in $L$ has a number of accepting paths that belongs to the set $S$.
For example, though this is an extreme case,
$\np = \rcnb{\naturalnumberpositive}$.

In the 1970s, Valiant started the study of ambiguity-limited versions
of NP by introducing the class $\up$~\cite{val:j:checking},
unambiguous polynomial time, which in the above notation is simply
$\rcb{1}$.
(The ambiguity (limit) of an NPTM refers to an upper
bound on how
many
\emph{accepting} paths it has as a function
of the input's length.  An NP language falls within
a given level of ambiguity if it is accepted by
some NPTM that happens to satisfy that ambiguity limit.)
More generally, for each function $f: \naturalnumber \to \naturalnumberpositive$ or $f: \naturalnumber \to \realnumberatleastone$, $\upleq{f(n)}$ denotes
the class of languages $L$ for which there is an NPTM $N$ such that,
for each $x$, if $x\not\in L$ then $N$ on input $x$ has no accepting
paths, and if $x\in L$ then $1 \leq \acc_N(x) \leq \lfloor f(|x|) \rfloor$ (where
$\acc_N(x)$ denotes the number of accepting computation paths of $N$
on input $x$).
(Since, for all $N$ and $x$, $\acc_N(x) \in \naturalnumber$, the class $\upleq{f(n)}$
just defined would be unchanged if
$\lfloor f(|x|) \rfloor$ were
replaced by
$f(|x|)$.)
Ambiguity-limited nondeterministic classes whose ambiguity limits
range from completely unambiguous ($\upleq{1}$, i.e., UP) to
polynomial ambiguity (Allender and Rubinstein's class
FewP~\cite{all-rub:j:print}) have been defined and studied.

In this paper, we show that many ambiguity-limited counting
classes---including ones based on
types of
logarithmic ambiguity, loglog
ambiguity, and logloglog ambiguity---are
contained in various collections of restricted counting classes.
We
do so primarily through two general theorems (Theorems~\ref{t:meta1}
and~\ref{t:meta2}) that
help make clear how, as the size of the ``holes'' allowed in
the sets underpinning the restricted counting classes becomes
smaller, %
one can handle more ambiguity.
Building on and generalizing earlier
framings~\cite{bor-hem-rot:j:powers-of-two}, we will quantify a set's lack of large holes as its ``nongappiness.''
Our basic
notion capturing this
(see Definition~\ref{d:nongappy-main}) is that a nonempty set is $F$-nongappy if for each element $m$ in the set there exists an $m'>m$ such that $m'$ also is in the set and satisfies  $|m'| \leq F(|m|)$.
Table~\ref{t:results-summary} summarizes our results about the
containment of ambiguity-limited counting classes in
restricted counting classes.
\begin{table}[!tbp]\small %
  \centering

  \def\arraystretch{1.35}

  \addtolength{\tabcolsep}{-3.5pt}

    \begin{tabular}{l l l}
      \multicolumn{2}{c}{If $T\subseteq \naturalnumberpositive$ X, then Y}&\\
      X  &  Y & Reference \\\hline

      has an $(n+\bigo(1))$-nongappy, P-printable subset&
                                                          $\fewp \subseteq \rcnb{T}$  &\cite{bor-hem-rot:j:powers-of-two}\\

      has an $\bigo(n)$-nongappy, P-printable subset&
                                                      $\upleq{\bigo(\log n)}
                                                    \subseteq \rcnb{T}$
         &Thm.~\ref{t:kn}\\

      has
      an $\bigo(n\log n)$-nongappy, P-printable subset&
                                                        $\upleq{\bigo(\sqrt{\log n})} \subseteq \rcnb{T}$\\[3pt]

      \parbox[c]{2.5in}{for any $c \in \naturalnumberpositive$ has an $n^{2^{c/2}}$-nongappy,
      \mbox{P-printable} subset}&
                                  $\upleq{\bigo(1) + \frac{\log \log n}{c}} \subseteq \rcnb{T}$
         &Thm.~\ref{t:n^k}\\[7pt]

      \parbox[c]{2.5in}{for any $k \in \naturalnumberpositive$ has an $n^{(\log n)^k}$-nongappy,
      \mbox{P-printable} subset}&
                                                                      $\upleq{\bigo(1) + \frac{1}{\ceiling{\log(k+1)+1}}\log \log \log n} \subseteq \rcnb{T}$&   Thm.~\ref{t:tradeoff}\\[6pt]

      has a $2^n$-nongappy, P-printable subset $S$&
                                                    \parbox[c]{2.0in}{$\upleq{\max(1,
      \floor{\frac{ \log^*(n) - \log^*(\log^*(n) + 1) - 1}{\lambda}})}$\\$\subseteq \rcnb{T}$,
      where\\$\lambda = 4 + \min_{s \in S, |s| \geq 2}(|s|)$}&  Thm.~\ref{t:tradeoff}\\

      is infinite & $\upleq{\bigo(1)} \subseteq \rcnb{T}$ & Cor.~\ref{c:infinite-big-o}
    \end{tabular}
    \caption{\label{t:results-summary}Summary of containment results.  (Theorem~\ref{t:tradeoff}
also
gives a slightly stronger form of the
$2^n$-nongappiness result than the version stated here.)}
\end{table}

Only for polynomial ambiguity was a result of this sort previously
known.  In particular, Beigel, Gill, and
Hertrampf~\cite{bei-gil-her:cOUTbybei-gil-jour-and-her-jour:mod},
strengthening Cai and Hemachandra's result
$\fewp \subseteq \parityp$~\cite{cai-hem:j:parity}, proved that
$\fewp \subseteq \rcb{1,3,5,\dots}$, and Borchert, Hemaspaandra, and
Rothe~\cite{bor-hem-rot:j:powers-of-two}
noted that $\fewp \subseteq \rcnb{T}$ for each nonempty set
$T\subseteq \naturalnumberpositive$ that has an easily presented
(formally, P-printable~\cite{har-yes:j:computation}, whose definition
will be given in Section~\ref{s:defs}) subset $V$ that is
$(n+\bigo(1))$-nongappy (i.e., for some $k$ the set $V$ never has
more than $k$ adjacent, empty lengths;  that is,
for each collection of $k+1$ adjacent lengths, $V$ will always contain
at least one string whose length is one of those $k+1$ lengths).

Our proof approach in the present paper
connects somewhat
interestingly to the history just mentioned.  We will describe in
Section~\ref{sec:gaps} the approach that we will call \emph{the
  iterative constant-setting technique}.  However, briefly put, that
refers to a process of sequentially setting a series of
constants---first $c_0$, then $c_1$, then $c_2$,~\dots, and then
$c_m$---in such a way that, for each $0 \leq j \leq m$, the summation
$\sum_{0 \leq \ell \leq j} c_\ell {j \choose \ell}$ falls in a certain
``yes'' or ``no'' target set, as required by the needs of the setting.
For $\rcnb{S}$ classes, the ``no'' target set will be $\{0\}$ and the
``yes'' target set will be $S$.  In this paper, we will typically
put sets
into restricted counting classes
by building Turing machines that guess
(for each $0 \leq \ell \leq j$)
cardinality-$\ell$
sets
of accepting paths of another NPTM and then amplify each such
successful accepting-path-set guess
by---via splitting/cloning of the path---creating
from it $c_\ell$ accepting paths.

A technically novel aspect of the proofs of the two main theorems
(Theorems~\ref{t:meta1} and~\ref{t:meta2}, each in effect
a metatheorem) is that
those proofs
each provide, in a unified way for a broad
class of functions, an analysis of value-growth
in the context of
iterated functions.

Cai and Hemachandra's~\cite{cai-hem:j:parity} result
$\fewp \subseteq \parityp$ was proven (as was an even more general
result about a class known as ``Few'') by the iterative
constant-setting technique.  Beigel, Gill, and
Hertrampf~\cite{bei-gil-her:cOUTbybei-gil-jour-and-her-jour:mod},
while generously noting that ``this result can also be obtained by a
close inspection of Cai and Hemachandra's proof,'' proved the far
stronger result $\fewp \subseteq \rcb{1,3,5,\dots}$ simply
and directly rather than by iterative constant-setting.
Borchert, Hemaspaandra, and
Rothe's~\cite{bor-hem-rot:j:powers-of-two} even more general result,
noted above for its proof, resurrected the iterative constant-setting
technique, using it to understand one particular level of ambiguity.
This present paper is, in effect, an immersion into the far richer
world of possibilities that the iterative constant-setting technique
can offer, if one puts in the work to
analyze and bound the growth rates of certain constants central to the
method. In particular, as noted above we use the iterative
constant-setting method to obtain a broad range of results
(see Table~\ref{t:results-summary})
regarding
how ambiguity-limited nondeterminism is not more powerful than
appropriately nongappy restricted counting classes.

Each of our results has immediate consequences regarding the power of
the primes as a restricted-counting acceptance type.  Borchert,
Hemaspaandra, and Rothe's result implies that if the set of primes has
an $(n+\bigo(1))$-nongappy, P-printable subset, then
$\fewp \subseteq \rcnb{\primes}$.  However, it is a long-open research
issue whether there exists \emph{any}
infinite, P-printable subset of the primes, much less an
$(n+\bigo(1))$-nongappy one.  Our results lower the bar on what one
must assume about how nongappy hypothetical infinite, P-printable
subsets of the primes are in order to imply that some
superconstant-ambiguity-limited nondeterministic version of NP is
contained in $\rcnb{\primes}$.  We prove that even infinite,
P-printable sets of primes with merely exponential upper bounds on
the size of
their gaps would yield such a result.
We also prove---by exploring the relationship between density and
nongappiness---that the Lenstra--Pomerance--Wagstaff
Conjecture~\cite{pom:j:primality-testing,wag:j:mersenne-numbers}
(regarding the asymptotic density of the Mersenne primes) implies that
$\upleq{\bigo(1) + \log \log n} \subseteq \rcnb{\primes}$.
The Lenstra--Pomerance--Wagstaff Conjecture is characterized in
Wikipedia~\cite{wik:url:gillies-conjecture} as being ``widely
accepted,'' the fact that it disagrees with a different
conjecture (Gillies' Conjecture~\cite{gil:j:mersenne})
notwithstanding.

\section{Definitions}\label{s:defs}

$\naturalnumber = \{0,1,2,\dots\}$.
$\naturalnumberpositive = \{1,2,\dots\}$.  Each positive natural
number, other than 1, is prime or composite.  A prime number is a
number that has no positive divisors other than 1 and itself.
$\primes = \{i \in \naturalnumber \condition i$ is a
prime$\} = \{2,3,5,7,11,\dots\}$.  A composite number is one that has
at least one positive divisor other than 1 and itself;
$\composites = \allowbreak\{i \in \naturalnumber \condition i$ is a
composite number$\} = \{4,6,8,9,10,12,\dots\}$.
$\realnumber$ is the set of all real numbers,
$\realnumberpositive = \{x \in \realnumber \condition x > 0\}$,
and $\realnumberatleastone = \{x \in \realnumber \condition x \geq 1\}$.

All logarithms in this paper---including those involved in log, loglog, and logloglog, those invoked by the definitions of $\log^{[i]}$ and $\log^*$ in the next paragraph, and also those in the definition of our new $\log^\circledast$ which appears later---are base~2.
Also, each call of the log function in this paper, $\log(\cdot)$, is
implicitly a shorthand for $\log(\max(1,\cdot))$.  We do this so that
formulas such as $\log\log\log(\cdot)$ do not cause domain problems on
small inputs.  (Admittedly, this is also distorting log in the
domain-valid open interval (0,1).  However, that interval never comes
into play in our paper except incidentally when iterated logs drop
something into it, and also in the definitions of $\log^*$ and
$\log^\circledast$, but in those cases we will argue that the max
happens not to change what those evaluate to there.)
For any function $f$, we use $f^{[n]}$ to denote function iteration:
$f^{[0]}(\alpha) = \alpha$ and inductively, for each
$n\in \naturalnumber$, $f^{[n+1]}(\alpha) = f(f^{[n]}(\alpha))$.
For each real number $\alpha \geq 0$, $\log^*(\alpha)$ (``(base 2) log
star of $\alpha$'') is the smallest natural number $k$ such that
$\log^{[k]}(\alpha) \leq 1$.  Although the logarithm of 0 is not
defined, note that $\log^*(0)$ is well-defined, namely it is 0 since
$\log^{[0]}(0) =
0$.\footnote{\label{f:half-open-interval-logstar}Since the definition
  of $\log^*(\cdot)$ allows values on the interval $[0,1)$, one might
  worry that the fact that we have globally redefined $\log(\cdot)$ to
  implicitly be $\log(\max(1,\cdot))$ might be changing what
  $\log^*(\cdot)$
  evaluates to.  However, it is easy to see that,
  with or without the max, what this evaluates to in the range $[0,1)$
  is 0, and so our implicit max is not changing the value of
  $\log^*$.}

As mentioned earlier, for any NPTM $N$ and any string $x$, $\acc_N(x)$
will denote
the number of accepting computation paths of $N$ on input $x$.
$\sharpp$~\cite{val:j:permanent} is the counting version of NP:
$\sharpp = \{f:\sigmastar \rightarrow \naturalnumber \condition
(\exists$ NPTM $N) (\forall x \in \sigmastar)[\acc_N(x) = f(x)]\}$.
$\parityp$ (``Parity P'') is the class of sets $L$ such that there is
a function $f\in\sharpp$ such that, for each string $x$, it holds that
$x \in L \iff f(x) \equiv 1 \pmod
2$~\cite{pap-zac:c:two-remarks,gol-par:j:ep}.

We will use $\bigo$ in its standard sense, namely, if
$f$ and $g$ are functions (from whose domain negative numbers are
typically excluded), then we say $f(n)=\bigo(g(n))$ exactly if there
exist positive integers $c$ and $n_0$ such that
$(\forall n \geq n_0)[f(n)\leq cg(n)]$.  We sometimes will also,
interchangeably,
speak
of or write a $\bigo$ expression as representing a set of
functions
(e.g., writing $f(n) \in
\bigo(g(n))$)~\cite{bra:j:crusade-asymptotics-big-o,bra-bra:b:algorithmics},
which in fact is what the ``big O'' notation truly represents.

  The notions $\rcnb{S}$, $\up$, and ${}\upleq{f(n)}$ are as defined in
Section~\ref{s:intro}.
For each $k\geq 1$, Watanabe~\cite{wat:j:hardness-one-way} implicitly
and Beigel~\cite{bei:c:up1} explicitly studied the constant-ambiguity
classes
$\rcb{1,2,3,\dots,k}$ which, following the notation of Lange and
Rossmanith~\cite{lan-ros:c:up-circuit-and-hierarchy}, we will usually
denote $\upleq{k}$.
We extend
the definition of $\upleq{f(n)}$
to
classes of functions as follows.
For classes $\mathcal F$
of functions mapping $\naturalnumber$ to $\naturalnumberpositive$ or $\naturalnumber$ to $\realnumberatleastone$,
we define
$\upleq{\mathcal F} = \bigcup_{f \in {\mathcal F}} \upleq{f(n)}$.
We mention that
the class $\upleq{\bigo(1)}$
is easily seen to
be equal to
$\bigcup_{k \in \naturalnumberpositive} \upleq{k}$, which is a good thing
since that latter definition of the notion is how
$\upleq{\bigo(1)}$ was defined in the literature more than a quarter of a century
ago~\cite{hem-zim:tOutByJourExceptUPkStuffOnlyIsHere:balanced}.
$\upleq{\bigo(1)}$
can be (informally) described as the
class of all sets acceptable by NPTMs with constant-bounded
ambiguity.
Other related classes will
also
be of interest to us.  For example,
$\upleq{\bigo(\log n)}$
captures the class of all sets acceptable by
NPTMs with logarithmically-bounded ambiguity.
Allender and Rubinstein~\cite{all-rub:j:print} introduced and studied
$\fewp$, the polynomial-ambiguity NP languages,
which can be defined by
$\fewp = \{ L \condition (\exists ~\textrm{polynomial}~f) [ L \in
\upleq{f(n)}]\}$.

The $\upleq{f(n)}$ classes, which will be central to this paper's
study, capture ambiguity-bounded versions of $\np$.  They are also
motivated by the fact that they completely characterize the
  existence of ambiguity-bounded (complexity-theoretic) one-way
  functions.\footnote{A (possibly nontotal) function $g$ is said to
  be a one-way function exactly if (a)~$g$ is polynomial-time
  computable, (b)~$g$ is honest (i.e., there exists a polynomial $q$
  such that, for each $y$ in the range of $g$, there exists a string
 $x$ such that $g(x)=y$ and $|x| \leq q(|y|)$; simply put, each
  string $y$ mapped to by $g$ is mapped to by some string $x$ that is not
  much longer than $y$), and (c)~$g$ is not polynomial-time
  invertible (i.e., there exists no (possibly nontotal)
  polynomial-time function $h$ such that for each $y$ in the range of
  $g$, it holds that $h(y)$ is defined and $g(h(y))$ is defined and
  $g(h(y)) =y$)~\cite{gro-sel:j:complexity-measures}.  For each
  $f: \naturalnumber \rightarrow \naturalnumberpositive$ and each
  (possibly nontotal) function $g: \sigmastar \rightarrow \sigmastar$,
  we say that $g$ is $f(n)$-to-one exactly if, for each
  $y \in \sigmastar$, $\card{ \{x \condition g(x)=y \}} \leq f(|y|)$.
   When $g$ is a one-way function, the function $f$ is sometimes
   referred to as an ambiguity limit on the function
   $g$, and the special case
   of $f(n)=1$ is the case of unambiguous one-way functions.
   (This is a
   different notion of ambiguity than that used for NPTMs, though
   Proposition~\ref{p:generic-link} shows that the notions are closely connected.)%
  }
\begin{proposition}\label{p:generic-link}
  Let $f$ be any function mapping from $\naturalnumber$ to
  $\naturalnumberpositive$.  $\p \neq \upleq{f(n)}$ if and only if
  there exists an $f(n)$-to-one one-way function.
\end{proposition}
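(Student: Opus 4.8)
The plan is to prove the two directions separately, in each case by an explicit construction that converts an object of one type into an object of the other. For the forward direction, suppose $\p \neq \upleq{f(n)}$, so there is a language $L \in \upleq{f(n)} \setminus \p$ witnessed by an NPTM $N$ that on inputs in $L$ has between $1$ and $\lfloor f(|x|)\rfloor$ accepting paths and on inputs outside $L$ has none. First I would standardize $N$ so that on input $x$ all of its computation paths have the same length $p(|x|)$ for a fixed polynomial $p$, and each path is named by the string in $\{0,1\}^{p(|x|)}$ recording its nondeterministic choices. Then define $g$ on strings of the form $\pair{x,w}$ with $|w|=p(|x|)$ by $g(\pair{x,w}) = x$ if $w$ names an accepting path of $N(x)$, and let $g$ be undefined otherwise. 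This $g$ is polynomial-time computable (simulate the named path), it is honest (the preimage of $x$, when nonempty, contains a string of length polynomial in $|x|$), and it is $f(n)$-to-one because the number of preimages of $x$ equals $\acc_N(x) \le \lfloor f(|x|)\rfloor \le f(|x|)$. Finally, $g$ is not polynomial-time invertible: a polynomial-time inverter $h$ would, on input $x$, either produce a genuine accepting path of $N(x)$ (certifying $x \in L$) or fail to be defined / fail the $g(h(x))=x$ test (certifying $x \notin L$), giving a polynomial-time decision procedure for $L$, contradicting $L \notin \p$. One has to be a little careful that ``inverter'' here means ``defined and correct on every $y$ in the range of $g$,'' i.e.\ on exactly the $x \in L$; I would phrase the argument so that $h$'s behavior is exploited only on those inputs, and observe that on inputs $x \notin L$ we can decide membership outright since such $x$ are not in the range of $g$.

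For the converse, suppose $g$ is an $f(n)$-to-one one-way function, with honesty polynomial $q$. Define $L = \{\pair{y,u} \condition u \in \sigmastar \text{ and } (\exists x)[\,|x| \le q(|y|),\ x \le u\ (\text{lexicographically}),\ g(x)=y\,]\}$, or some similar ``prefix-search'' language built from the range and preimages of $g$; the standard move is to make $L$ encode enough of an inverse that $L \in \p$ would let us invert $g$ in polynomial time via bit-by-bit binary search on the preimage. Then $L \in \upleq{f(n)}$: an NPTM on input $\pair{y,u}$ guesses a string $x$ with $|x|\le q(|y|)$, verifies $x \le u$ and $g(x) = y$ in polynomial time, and accepts iff all checks pass; the number of accepting paths is the number of preimages $x \le u$ of $y$, which is at most the total number of preimages of $y$, namely at most $f(|y|)$ — and here I would take care to set up the length conventions so that $f(|y|)$ is bounded by $f$ of the input length $|\pair{y,u}|$ (using monotonicity of the ambiguity bound, or padding, exactly as the paper's remark after the definition of $\upleq{f(n)}$ licenses). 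And $L \notin \p$: if $L$ were in $\p$, then on input $y$ in the range of $g$ we could find the lexicographically least preimage by binary search (polynomially many queries of the form $\pair{y,u}$), yielding a polynomial-time $h$ that inverts $g$ on its whole range, contradicting one-wayness.

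The main obstacle, and the place I would spend the most care, is matching the ambiguity bookkeeping to the definition of $\upleq{f(n)}$: the ambiguity in the constructed NPTM is naturally a function of $|y|$ (the length of the relevant component), not of the full input length, so I need either a padding/encoding scheme that makes these coincide or an appeal to the fact — noted in the excerpt — that replacing $f(|x|)$ by $\lfloor f(|x|)\rfloor$ changes nothing and that we may as well assume the ambiguity bound is nondecreasing. A secondary subtlety is the handling of possibly-nontotal polynomial-time functions and the precise meaning of ``not polynomial-time invertible'' (invertibility is required only on the range of $g$), which must be threaded consistently through both directions so that the $\p$-decision procedure we extract in each contradiction step really does decide the whole language and not just part of it. Neither point is deep, but both are exactly the kind of detail where an informal argument can go subtly wrong.
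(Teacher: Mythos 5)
Your proof follows the same standard approach the paper itself points to: Proposition~\ref{p:generic-link} is not proved in the paper but is asserted to ``follow from the same argument as is used'' in the textbook treatment of the special cases (\cite[Proof of Theorem~2.5]{hem-ogi:b:companion}), and your two constructions (accepting-path witnesses giving a one-way function, and a prefix-/threshold-search language giving a hard language in $\upleq{f(n)}$) are exactly that argument. The forward direction is correct as written: the function $g(\pair{x,w})=x$ on accepting paths is polynomial-time, honest, $f(n)$-to-one because the preimage count of $y=x$ is $\acc_N(x)\le\floor{f(|x|)}\le f(|y|)$, and a polynomial-time inverter yields a decider for $L$ (noting, as you do, that ``undefined on $x\notin\mathrm{range}(g)$'' is detectable and in fact certifies $x\notin L$).

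However, the converse direction has a genuine gap, and your proposed fallback makes it worse rather than better. Your NPTM for $L$ on input $\pair{y,u}$ has $\le f(|y|)$ accepting paths, whereas membership in $\upleq{f(n)}$ requires $\le f(n)$ accepting paths for $n=|\pair{y,u}|>|y|$; you correctly identify this as the sticking point, but your second escape route---``we may as well assume the ambiguity bound is nondecreasing''---is not licensed by the paper. The only remark the paper makes after the definition of $\upleq{f(n)}$ is that $\floor{f(|x|)}$ can be replaced by $f(|x|)$; it says nothing about monotonicity. Indeed, immediately after stating Proposition~\ref{p:generic-link} the paper explicitly says the claim ``holds even if $f$ is not nondecreasing,'' so that assumption would prove a strictly weaker statement than the one in question. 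The first escape route (a padding or encoding scheme) is what is actually needed, but it is not a routine detail: naively padding $\pair{y,u}$ to some length $n'$ only helps if $f(n')\ge f(|y|)$, and for a non-monotone $f$ such an $n'$ need not exist above the minimum encoding length (consider $f$ that is large at a few lengths and $1$ elsewhere). Handling this requires a nontrivial adjustment---for instance, arguing separately about lengths where $f$ is small via the observation that $f(n)\ge 1$ always holds, or reducing to a modified one-way function---rather than ``the kind of detail where an informal argument can go subtly wrong.'' Since the paper itself only gestures at the textbook argument (which is stated for the nondecreasing special cases), it is fair to say both accounts leave this point implicit; but if you claim to prove the proposition as stated, you need to close this gap rather than assume it away.
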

We say a function $f$ is nondecreasing if $n \leq n'$ implies $f(n) \leq f(n')$.
Proposition~\ref{p:generic-link} holds even if $f$ is not nondecreasing, and holds even if
$f$ is not a computable function.  To the best of our knowledge,
Proposition~\ref{p:generic-link} has not been stated before for the
generic case of any function
$f: \naturalnumber \rightarrow \naturalnumberpositive$.  However, many
concrete special cases are well known, and the proposition follows
from the same argument as is used for those (see for example
\cite[Proof of Theorem 2.5]{hem-ogi:b:companion} for a tutorial
presentation of that type of argument).  In particular, the
proposition's special cases are known already for UP (due
to~\cite{gro-sel:j:complexity-measures,ko:j:operators}),
$\upleq{k}$ (for
each $k \in \naturalnumberpositive$)
and $\upleq{\bigo(1)}$ %
(in~\cite{hem-zim:tOutByJourExceptUPkStuffOnlyIsHere:balanced,ber:thesis:iso}),
$\fewp$ (in~\cite{all-rub:j:print}), and (since the following is
another name for $\np$) $\upleq{2^{n^{\bigo(1)}}}$ (folklore, see
\cite[Theorem 2.5, Part 1]{hem-ogi:b:companion}).
The proposition
holds not just for single functions $f$, but also for classes that are
collections of functions, e.g., $\upleq{\bigo(\log n)}$.
We
pause from our presentation of definitions to discuss whether there
even are sets that fall in such classes as $\upleq{2}$ or
$\upleq{\bigo(\log n)}$, yet are not also obviously even in
$\up$. In terms of directly defined, highly concrete, natural examples,
to the best of our knowledge, none are yet
known.
But the lack of currently-known
concrete sets does not mean that the study is without value.
Ambiguity is a natural resource, and complexity tries to better understand the relationships between different model and resource restrictions, such as between limited ambiguity and restricted counting classes.

However, we in fact will now give three different types of
indirect constructions that put sets into, for example, such limited
ambiguity classes as $\upleq{\bigo(\log n)}$.  In each of our
three construction types, there is
no obvious argument that the sets constructed belong to $\up$.  Thus
the approaches are providing candidates sets for, for example,
$\upleq{\bigo(\log n)} - \up$.

One type is implicit in the proofs
underpinning Proposition~\ref{p:generic-link} and the results in
the paragraph that follows it.
Using logarithmic ambiguity as our example,
each $\bigo(\log n)$-to-1 honest, polynomial-time computable function $f$ implicitly (from the construction underpinning Proposition~\ref{p:generic-link}) defines a set $L_f$ that is in $\up_{\leq \bigo(\log n)}$.
(Additionally, $L_f$ has the property that if $L_f \in \p$, then $f$ is polynomial-time invertible.)
We see no obvious way of showing that $L_f$ will be in $\up$.
Similar claims hold for the other density bounds.
So low-ambiguity sets are in fact closely tied, via Proposition~\ref{p:generic-link}, to whether
low-injectivity
(complexity-theoretic) one-way functions exist.

The second type of construction of sets in, for example, $\upleq{\bigo(\log n)}$ comes from looking at downward disjunctive reducibility cones, i.e., taking an ``or'' of a collection of queries to a UP set.
Namely, the class $\mathrm{R}^p_{\bigo(\log n)\text{-dtt}}(\up)$ is clearly contained in $\upleq{\bigo(\log n)}$ (and so is $\mathrm{R}^p_{\bigo(\log n)\text{-T}}(\up)$, since that equals $\mathrm{R}^p_{\bigo(\log n)\text{-dtt}}(\up)$).
Briefly, a set $L$ is in $\mathrm{R}^p_{\bigo(\log n)\text{-dtt}}(\up)$ if there is a UP set $A_L$, such that on input $x$ we can in polynomial time compute a list of $\bigo(\log |x|)$ strings such that $x \in L$ exactly if at least one string in our list belongs to $A_L$.
The class $\mathrm{R}^p_{\bigo(\log n)\text{-T}}(\up)$ is defined the same way except instead of nonadaptive queries the machine can ask $\bigo(\log n)$ sequential---i.e., adaptive---oracle queries to $A_L$, but must accept exactly if at least one belongs to $A_L$.
To make this a bit more concrete, note that from
each $\up$ set, $A$, we get the following
simple example of such an $\mathrm{R}^p_{\bigo(\log n)-\text{dtt}}(\up)$ set,
which by the above comment will also belong to $\upleq{\bigo(\log n)}$:
$\textrm{At-Least-One-of-Short-List}_A = \{  (I_1,I_2,I_3,\ldots,I_k) \condition
k \leq \log(\sum_{1 \leq i \leq k} |I_i|) \land
\{I_1, I_2, I_3, \dots I_k\} \cap A \neq \emptyset \}$,
 i.e., the set of all lists of potential instances of $A$ such that at least one member of the list belongs to $A$ and the number of items in the list is quite small relative to the size of the list's encoding.
If $A$ in fact belongs to $\up \cap \coup$ then $\textrm{At-Least-One-of-Short-List}_A$ in fact is itself in $\up \cap \coup$ (since $\p^{\up \cap \coup} = \up\cap\coup$); so for this example to have any possible chance of escaping $\up$,
we need $A$ to be a set in $\up - \coup$.\footnote{Most complexity theorists probably suspect that $\up \neq \coup$ (equivalently, $\up - \coup \neq \emptyset$), though likely with less conviction than they suspect its famous ambiguity-unbounded analogue, $\np \neq \conp$.  Both those results in fact are known to hold with probability one relative to a random oracle (respectively by Hemaspaandra and Zimand~\cite{hem-zim:j:balanced} and by the seminal paper of Bennett and Gill~\cite{ben-gil:j:prob1}), although that is not known to be determinative of whether they hold in the unrelativized world.
}

There exists a third approach to placing sets within bounded-ambiguity classes,
which comes from Theorem~5 of a paper by Allender and Rubinstein~\cite{all-rub:j:print}.
That approach---which Allender and Rubinstein do for the case of $\fewp$ but which, with the natural adjustment of changing the degree of sparseness, would also apply to our classes---however creates, via prefix sets, \emph{sparse} sets in $\fewp$ (or our other bounded-ambiguity classes).
And that is a higher hurdle than merely putting \emph{some} sets interestingly in our classes.
In fact, the one-way functions approach above completely characterizes whether the classes collapse to $\p$, and the Allender--Rubinstein approach completely characterizes whether the sparse sets in the class collapse to the sparse sets in $\p$.
In both cases, the issue of whether the constructed sets are in $\up$ is an open one; we see no obvious argument that the sets will be in $\up$, but that is not a guarantee.

Returning to definitions, a set $L$ is said to be P-printable~\cite{har-yes:j:computation}
exactly if there is a deterministic polynomial-time Turing machine
such that, for each $n \in \naturalnumber$, the machine when given as input the string
$1^n$ prints (in some natural coding, such as printing each of the
strings of $L$ in lexicographical order, inserting the character \#
after each) exactly the
set of all strings in $L$ of length less than or equal to $n$.

Notions of whether a set has large empty expanses between one element
and the next will be central to our work in this paper.  Borchert,
Hemaspaandra, and Rothe~\cite{bor-hem-rot:j:powers-of-two} defined and
used such a notion, in a way that is tightly connected to our work.
We present here the notion they called ``nongappy,'' but here, we will
call it ``nongappy$_{\textrm{value}}$'' to distinguish their
value-centered definition from the length-centered definitions that
will be our norm in this paper.
\begin{definition}[\cite{bor-hem-rot:j:powers-of-two}]
  A set $S \subseteq \naturalnumberpositive$ is said to be
  nongappy$_{\textrm{value}}$ if $S \neq \emptyset$ and
  $(\exists k>0)(\forall m\in S)(\exists m'\in S) [m'>m\land m'/m \leq
  k]$.
\end{definition}
This says that the gaps between one element of the set and the next
greater one are, as to the \emph{values} of the numbers, bounded by a
multiplicative constant.  Note that, if we view the natural numbers as
naturally coded in binary, that is equivalent to saying that the gaps
between one element of the set and the next greater one are, as to the
\emph{lengths} of the two strings, bounded by an additive constant.
That is, a nonempty set $S\subseteq \naturalnumberpositive$ is said to
be nongappy$_{\textrm{value}}$ by this definition if the gaps
in the lengths of elements of $S$
are
bounded by an
additive constant, and thus we have the following result that clearly
holds.
Note that throughout this paper, for strings $x$, we use $|x|$ to refer to the number of characters in the string $x$ but, as one can see in the following proposition, for natural numbers $m$, we use $|m|$ to refer to the length of the binary representation of $m$.
\begin{proposition} \label{p: nongappy-value}
  A set $S \subseteq \naturalnumberpositive$ is
  nongappy$_{\textrm{value}}$ if and only if $S \neq \emptyset$ and
  $(\exists k>0)(\forall m \in S) (\exists m' \in S) [m' > m \land
  |m'| \leq |m| + k]$.
\end{proposition}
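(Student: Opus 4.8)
The plan is to prove both directions by converting between the multiplicative bound on \emph{values} that appears in the definition of nongappy$_{\textrm{value}}$ and the additive bound on binary-encoding \emph{lengths} that appears in the proposition, using the standard fact that for each $m \in \naturalnumberpositive$ the length of the binary encoding of $m$ satisfies $|m| = \floor{\log m} + 1$. (Recall all logs here are base~$2$, and since every element of $S$ is a positive integer the implicit $\log(\max(1,\cdot))$ convention coincides with the ordinary logarithm throughout, so no domain issues arise.) The nonemptiness clause $S \neq \emptyset$ is literally identical on both sides, so only the quantified gap condition needs attention.

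First I would handle the ``only if'' direction. Suppose $k > 0$ witnesses that $S$ is nongappy$_{\textrm{value}}$, so that for each $m \in S$ there is an $m' \in S$ with $m' > m$ and $m'/m \leq k$ (note that $k \geq 1$ necessarily, since $m' > m$). Taking base-$2$ logarithms gives $\log m' \leq \log m + \log k$. Since for nonnegative reals $a,b$ one has $\floor{a+b} \leq a+b < \floor{a} + \ceiling{b} + 1$ and $\floor{a+b}$ is an integer, it follows that $\floor{\log m'} \leq \floor{\log m} + \ceiling{\log k}$; adding $1$ to both sides yields $|m'| \leq |m| + \ceiling{\log k}$. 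Hence $k' = \max(1, \ceiling{\log k})$, which is a positive integer (the $\max$ with $1$ is there only to meet the ``$\exists k>0$'' requirement when $k$ is close to~$1$), witnesses the right-hand side.

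Next I would handle the ``if'' direction. Suppose $k > 0$ is such that for each $m \in S$ there is an $m' \in S$ with $m' > m$ and $|m'| \leq |m| + k$. Using the elementary bounds $m' \leq 2^{|m'|} - 1 < 2^{|m'|}$ and $m \geq 2^{|m|-1}$, we get $m'/m < 2^{|m'|} / 2^{|m|-1} = 2^{|m'| - |m| + 1} \leq 2^{k+1}$, so the constant $k'' = 2^{k+1}$ witnesses that $S$ is nongappy$_{\textrm{value}}$. The proof is short; the only thing that needs care---the nearest thing to an obstacle---is the floor/ceiling bookkeeping around the off-by-one in $|m| = \floor{\log m} + 1$ and ensuring the manufactured constants are strictly positive (hence the $\max(1,\cdot)$ above). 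None of this is deep, but it is the one place where a careless argument could slip.
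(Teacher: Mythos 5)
Your proof is correct and formalizes exactly the correspondence the paper appeals to but does not spell out: the paper declares the proposition to ``clearly hold'' after informally noting that a multiplicative gap bound on the \emph{values} of elements of $S$ is the same thing as an additive gap bound on their binary-encoding \emph{lengths}, which is precisely what your two directions establish via $|m|=\floor{\log m}+1$. The floor/ceiling bookkeeping is sound, and the $\max(1,\cdot)$ is harmless (indeed unnecessary, since $m'>m$ forces $k>1$ and hence $\ceiling{\log k}\geq 1$).
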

In Section~\ref{sec:gaps}
we define other notions of
nongappiness
that allow
larger gaps than the above does.  We will always focus on lengths, and
so we will consistently use the term ``nongappy'' in our definitions
to speak of gaps quantified in terms of the \emph{lengths} of the strings
involved.  We now introduce a new notation for the notion
nongappy$_{\textrm{value}}$, and show that our definition does in fact
refer to the same notion as that of
Borchert, Hemaspaandra, and Rothe.
\begin{definition} \label{def: n+O(1)-nongappy} A set
  $S \subseteq \naturalnumberpositive$ is $(n+\bigo(1))$-nongappy if
  $S \neq \emptyset$ and
  $(\exists f \in \bigo(1))(\forall m \in S)(\exists m'\in S) [m' > m
  \land |m'| \leq |m| + f(|m|)]$.
\end{definition}

The issue of sets having
nongappy (in various strengths of that notion),
P-printable subsets will be very important
to our paper.  Let us give a simple
example that helps illustrate some of these
notions, and does so by giving a
nongappy, P-printable subset of SAT (naturally
encoded).
It is not known whether all
$(n+\bigo(1))$-nongappy
NP-complete sets have $(n+\bigo(1))$-nongappy, P-printable subsets (or even have any infinite, P-printable subset at all).
(We mention in passing that it is not hard to see there are NP-complete sets
that are not
$(n+\bigo(1))$-nongappy. Those sets trivially cannot have $(n+\bigo(1))$-nongappy subsets, which is why in this example, to be fair, we focus only on $(n+\bigo(1))$-nongappy NP-complete sets.)
However, in its natural encoding, SAT clearly does have $(n+\bigo(1))$-nongappy, P-printable subsets, e.g., $\{``v", ``v \lor v", ``v \lor v \lor v", \ldots\}$, where $v$ is some fixed variable name.
We thus have an example where SAT has a simplicity property (namely, having a $(n+\bigo(1))$-nongappy, P-printable subset) that is not currently known to hold for all
$(n+\bigo(1))$-nongappy
NP-complete sets.

While at first glance Definition~\ref{def: n+O(1)-nongappy} might seem to be different from Borchert,
Hemaspaandra, and Rothe's definition, it is easy to see that both
definitions are equivalent.
\begin{proposition} \label{p: (n + O(1))-ng equiv ng-value}
  A set $S$ is $(n+\bigo(1))$-nongappy if and only if it is
  nongappy$_{\textrm{value}}$.
\end{proposition}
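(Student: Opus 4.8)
The plan is to reduce the claim to the length-based reformulation of nongappy$_{\textrm{value}}$ provided by Proposition~\ref{p: nongappy-value}; once both notions are phrased in terms of additive length-gaps, the equivalence is essentially the observation that a $\bigo(1)$ additive length-gap is the same thing as a constant additive length-gap, for sets of strings. So I would first invoke Proposition~\ref{p: nongappy-value} to rewrite ``$S$ is nongappy$_{\textrm{value}}$'' as: $S \neq \emptyset$ and there is a constant $k > 0$ such that every $m \in S$ has some $m' \in S$ with $m' > m$ and $|m'| \leq |m| + k$.

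For the direction from nongappy$_{\textrm{value}}$ to $(n+\bigo(1))$-nongappy, I would simply take the constant $k$ just obtained and let $f$ be the constant function with value $k$; since a constant function lies in $\bigo(1)$, this $f$ immediately witnesses that $S$ is $(n+\bigo(1))$-nongappy. For the converse, I would start from an $f \in \bigo(1)$ witnessing $(n+\bigo(1))$-nongappiness, unpack the paper's definition of $\bigo(\cdot)$ to obtain positive integers $c$ and $n_0$ with $f(j) \leq c$ for all $j \geq n_0$, and then absorb the finitely many remaining lengths: since $\{0, 1, \ldots, n_0 - 1\}$ is finite, $f$ is bounded on it by some $c'$, so $f(j) \leq \max(c, c')$ for every $j$. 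Taking $k = \max(c, c', 1)$ gives a positive constant for which the successor $m'$ promised by $(n+\bigo(1))$-nongappiness satisfies $|m'| \leq |m| + f(|m|) \leq |m| + k$, and Proposition~\ref{p: nongappy-value} then yields that $S$ is nongappy$_{\textrm{value}}$.

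I do not expect a genuine obstacle here; this is a routine argument. The only point needing care is the one just noted: membership in $\bigo(1)$ only bounds $f$ beyond a threshold $n_0$, so before appealing to Proposition~\ref{p: nongappy-value} one must separately fold the finitely many small-length values of $f$ into the constant. (It does no harm to round the resulting bound up to an integer, although Proposition~\ref{p: nongappy-value} does not require $k$ to be integral.)
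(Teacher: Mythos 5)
Your proof is correct and follows the same route as the paper: both directions reduce to the length-based reformulation in Proposition~\ref{p: nongappy-value}, with the reverse direction noting that the constant function $k$ lies in $\bigo(1)$ and the forward direction extracting a global constant bound from $f \in \bigo(1)$. The only difference is that you spell out the step of absorbing the finitely many lengths below $n_0$, which the paper compresses into the word ``clearly.''
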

\begin{proof}
  Both directions follow immediately from Proposition~\ref{p:
    nongappy-value}.  In particular, if $S$ is
  $(n + \bigo(1))$-nongappy then there is some function $f$ as in
  Definition~\ref{def: n+O(1)-nongappy}.
  Since $f \in \bigo(1)$,
  clearly there exists a constant $k>0$ such that
  $(\forall m \in
  \naturalnumberpositive)[f(|m|) \leq k]$.
Conversely, if $S$ is
nongappy$_{\textrm{value}}$ then there is a constant $k$ as in
  Proposition~\ref{p: nongappy-value}, and since the constant function
  $k$ is of course $O(1)$, we that have that $S$ is
  $(n + \bigo(1))$-nongappy.
\end{proof}
Finally, the end of the fifth paragraph of Section~\ref{s:intro} gave
an on-the-fly, quite simple characterization of the
$(n+\bigo(1))$-nongappy sets as being the class of all sets
$S \subseteq \naturalnumberpositive$ such that, for some $k>0$, $S$
never has more than $k$ adjacent lengths containing no strings
($k=0$ was not
excluded, but w.l.o.g.~we may assume $k>0$, since if it holds for $k=0$
it holds for $k=1$).  For completeness, we
briefly explain why that indeed is a correct characterization of that
notion.  In particular, in light of Proposition~\ref{p: (n + O(1))-ng
  equiv ng-value}, we need only show that, for each set
$S \subseteq \naturalnumberpositive$, the just-mentioned
characterization holds exactly if the right-hand side of
Proposition~\ref{p: nongappy-value} holds.  If the former holds with
$k$ set to $k'$, then the right-hand side of Proposition~\ref{p:
  nongappy-value} clearly holds with $k$ set to $k'+1$.  If the
right-hand side of Proposition~\ref{p: nongappy-value} holds with $k$
(recall, $k>0$ there) set to $k'$, then the characterization from
Section~\ref{s:intro} clearly holds with $k$ set to
$\max(k'-1,\min_{m \in S}(|m|))$.

\section{Related Work}

The most closely related work has already largely been covered in the
preceding sections, but we now briefly mention
that work and its
relationship to this paper.  In particular, the most closely related
papers are the work of Cai and Hemachandra~\cite{cai-hem:j:parity},
Hemaspaandra and Rothe~\cite{hem-rot:j:rice}, and Borchert,
Hemaspaandra, and Rothe~\cite{bor-hem-rot:j:powers-of-two}, which
introduced and studied the iterative constant-setting technique as a
tool for exploring containments of counting classes.  The former two
(and also the
important related work of Borchert and Stephan~\cite{bor-ste:j:rice})
differ from the present paper in that they are not about restricted
counting classes, and unlike the present paper,  Borchert, Hemaspaandra,
and Rothe's paper, as to containment of ambiguity-limited classes, addresses
only $\fewp$.  (It is known that $\fewp$ is contained in the class
known as $\spp$ and is indeed so-called
$\spp$-low~\cite{koe-sch-tod-tor:j:few,fen-for-kur:j:gap,fen-for-li:j:gap-defin},
however that does not make our containments in restricted counting
classes uninteresting, as it seems unlikely that $\spp$ is contained
in \emph{any} restricted counting class, since SPP's ``no'' case
involves potentially exponential numbers of accepting paths, not zero
such paths.)
The
important paper of Cox and
Pay~\cite{cox-pay:t:wppplus}---along with many other
interesting results on counting classes---draws on the result of Borchert,
Hemaspaandra, and
Rothe~\cite{bor-hem-rot:j:powers-of-two} that appears as our
Theorem~\ref{t:bhr} to establish that
$\fewp \subseteq \rcb{2^{t}-1 \condition t \in
  \naturalnumberpositive}$ (note that the right-hand side is the
restricted counting class defined by the Mersenne numbers), a result
that itself implies $\fewp \subseteq \rcb{1,3,5,\dots}$.

``RC'' (restricted counting)
classes~\cite{bor-hem-rot:j:powers-of-two} are central to this paper.
The literature's earlier ``CP''
classes~\cite{cai-gun-har-hem-sew-wag-wec:j:bh2} might at first seem
similar, but they don't restrict rejection to the case of having zero
accepting paths.  Leaf languages~\cite{bov-cre-sil:j:sparse}, a
different framework, do have flexibility to express ``RC'' classes,
and so are an alternate notation one could use, though in some sense
they would be overkill as a framework here due to their extreme
descriptive power.  The class $\rcb{1,3,5,\dots}$ first appeared in
the literature under the name
$\rm
ModZ_{2}P$~\cite{bei-gil-her:cOUTbybei-gil-jour-and-her-jour:mod}.
Ambiguity-limited classes are also quite central to this paper, and
among those we study (see Section~\ref{s:defs}) are ones
defined, or given their notation that we use, in the following papers:
\cite{val:j:checking,bei:c:up1,wat:j:hardness-one-way,all-rub:j:print,lan-ros:c:up-circuit-and-hierarchy}.

The counting classes studied in this paper are all \emph{language}
classes, although each is or can be defined via $\sharpp$
functions.  (For example,
FewP is the class of all sets $L$ such that, for some polynomial
$p$ and some $\sharpp$ function $f$, we have that for each $x\in\sigmastar$
it holds that
(a)~$x \not\in L \implies f(x)=0$, and (b)~$x \in L \implies
1 \leq f(x) \leq p(|x|)$. Note that there is a
restriction in play there, namely,
that the $\sharpp$ function underpinning $L$ will
on no input $x$ take on any value strictly greater than
$p(|x|)$.)
The direct study
of \emph{counting classes of functions}, and the properties
and interrelations of those classes,
is an active research area, although we mention that to the best
of our knowledge the results of the
present paper do
not follow from any currently known results in that area.  As
a pointer to some of the interesting current research
in that area, we mention the work of Antonopoulos,
Bakali, Chalki,
Pagourtzis, Pantavos, and Zachos~\cite{cha:thesis:counting-easy-decision,ant-bak-cha-pag-pan-zac:j:counting-with-easy-decision}.
Finally, interesting but
fundamentally different in flavor from the work of this paper is the
broad stream of work focused on
completely classifying the complexity of various
\emph{families of counting
problems},
see,
e.g.,~\cite{cai-che:counting-problems-volume-1}.

P-printability is due to Hartmanis and
Yesha~\cite{har-yes:j:computation}.
Allender~\cite{all:j:pseudorandom} established a sufficient condition,
which we will discuss later, for the existence of infinite, P-printable
subsets of the primes.
As discussed in the text right after Corollary~\ref{c:primes-oh-one} and in
Footnote~\ref{f:tao-gaps}, none of the results of Ford, Maynard, Tao,
and others~\cite{for-gre-kon-tao:j:gaps2016,%
  may:j:prime-gaps,for-gre-kon-may-tao:j:gaps2018} about ``infinitely
often'' lower bounds on gaps in the primes, nor any possible future
bounds, can possibly be strong enough to be the sole obstacle to a
$\fewp \subseteq \rcnb{\primes}$ construction.

\section{Gaps, Ambiguity, and Iterative
  Constant-Setting}\label{sec:gaps}

What is the power of NPTMs whose number of accepting paths is 0 for
each string not in the set and is a prime for each string in the set?
In particular, does that class, $\rcnb{\primes}$, contain
$\fewp$ or, for that matter, any interesting ambiguity-limited
nondeterministic class?  That is the question that motivated this
work.

Why might one hope that $\rcnb{\primes}$ might contain some
ambiguity-limited classes?  Well, we clearly have that
$\np\subseteq \rcnb{\composites}$, so having the composites as our
acceptance targets allows us to capture all of $\np$. Why?  For any NP
machine $N$, we can make a new machine $N'$ that mimics $N$, except it
clones each accepting path into
four
accepting paths, and so when $N$
has zero accepting paths $N'$ has zero accepting paths, and when $N$ has at
least one accepting path $N'$ has a composite number of accepting
paths.\footnote{One
  should not think that the fact that
  $\np\subseteq \rcnb{\composites}$ (equivalently, $\np = \rcnb{\composites}$) holds means that
  $\conp\subseteq \rcnb{\primes}$; the latter in fact would
  immediately imply that $\np=\conp$, since
  for each $S \subseteq \naturalnumberpositive$ it holds that
  $\rcnb{S} \subseteq \np$.  As to whether the fact that
  $\np\subseteq \rcnb{\composites}$ holds means that
  $\conp\subseteq \rcnb{\naturalnumber - \composites}$ holds, the
  latter is not even well-defined, since the RC classes are defined
  only for sets $S$ satisfying $S \subseteq \naturalnumberpositive$.
  But even if one removes 0, and asks about
  $\conp\subseteq \rcnb{\naturalnumberpositive - \,\composites}$, for
  the same reason just mentioned that containment would imply
  $\np=\conp$.}

On the other hand, why might one suspect that interesting
ambiguity-limited nondeterministic classes such as $\fewp$ might
\emph{not} be contained in $\rcnb{\primes}$? Well, it is not even
clear that $\fewp$ is contained in the class of sets that are accepted
by NPTMs that accept via having a prime number of accepting paths, and
reject by having a nonprime number of accepting paths (rather than
being restricted to rejecting only by having zero accepting paths, as
is
$\rcnb{\primes}$).  That is, even a seemingly vastly more flexible
counting class does not seem to in any obvious way contain $\fewp$.

This led us to revisit the issue of identifying the sets
$S \subseteq \naturalnumberpositive$ that satisfy
$\fewp \subseteq \rcnb{S}$, studied previously by,
for example,
Borchert,
Hemaspaandra, and
Rothe~\cite{bor-hem-rot:j:powers-of-two}
and Cox and Pay~\cite{cox-pay:t:wppplus}.
In particular, Borchert, Hemaspaandra, and Rothe showed, by the
iterative constant-setting technique, the following theorem.
\begin{theorem}[{\cite[Theorem
    3.4]{bor-hem-rot:j:powers-of-two}}]\label{t:bhr}
  If $T \subseteq \naturalnumberpositive$ has an
  $(n+\bigo(1))$-nongappy, P-printable subset, then
  $\fewp \subseteq \rcnb{T}$.
\end{theorem}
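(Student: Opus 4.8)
The plan is to prove the containment directly via the iterative constant-setting technique, using the hypothesized $(n+\bigo(1))$-nongappy, P-printable subset $V\subseteq T$. Since any $\sharpp$ function witnessing $L\in\rcnb{V}$ also witnesses $L\in\rcnb{T}$ (because $V\subseteq T$), it suffices to produce, for each $L\in\fewp$, an NPTM $M$ with $\acc_M(x)=0$ when $x\notin L$ and $\acc_M(x)\in V$ when $x\in L$. Fix such an $L$. Since $L\in\fewp$, there are an NPTM $N$ and a polynomial $p$ with $\acc_N(x)\le p(|x|)$ for all $x$ and $\acc_N(x)\ge 1\iff x\in L$; by standard padding we may assume all computation paths of $N$ on a given input have the same length. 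By Proposition~\ref{p: nongappy-value} there is a constant $k>0$ such that every $m\in V$ has a successor $m'\in V$ with $m'\le 2^k m$; from this one checks that for every positive integer $A$ the least element $v$ of $V$ with $v\ge A$ satisfies $v=\bigo(A)$ (more precisely $v\le 2^k\max(A,\min V)$). This multiplicative-gap bound is the only place the nongappiness of $V$ enters, and it is exactly what keeps the constants below from exploding.

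Next I would define the constant sequence $c_0,c_1,c_2,\dots$ (depending only on $V$). Set $c_0=0$; having fixed $c_0,\dots,c_{j-1}$, put $A_j=\sum_{i=0}^{j-1}c_i\binom{j}{i}$, let $v_j$ be the least element of $V$ with $v_j\ge A_j$, and set $c_j=v_j-A_j\ge 0$. Then for every $j\ge 1$ we have $s_j:=\sum_{i=0}^{j}c_i\binom{j}{i}=A_j+c_j=v_j\in V$, whereas $s_0=c_0=0$. The growth estimate I need is that $s_j$ is only singly exponential in $j$: since $\binom{j+1}{i}\le(j+1)\binom{j}{i}$ for $i\le j$, we get $A_{j+1}\le(j+1)s_j$, hence $s_{j+1}=v_{j+1}=\bigo((j+1)s_j)$, and iterating yields $s_j=2^{\bigo(j\log j)}$. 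In particular $s_j\le 2^{q(n)}$ for $j\le p(n)$, where $n=|x|$ and $q$ is a fixed polynomial. Moreover, since $V$ is P-printable it is sparse, so its elements of length at most $\poly(n)$ form a list of length $\poly(n)$ printable in time $\poly(n)$; consequently $c_0,\dots,c_{p(n)}$ are all computable in time polynomial in $n$ (every arithmetic operation is on integers of $\poly(n)$ bits, and each $v_j$ is found by scanning the printed list up to length $|A_j|+k$).

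Finally I would assemble the witnessing machine $M$. On input $x$ with $n=|x|$, $M$ computes $c_0,\dots,c_{p(n)}$, then nondeterministically picks a ``level'' $\ell\in\{0,1,\dots,p(n)\}$ together with a strictly increasing $\ell$-tuple $\pi_1<\cdots<\pi_\ell$ of path descriptions of $N$ on $x$; it simulates $N$ along each $\pi_i$, and if all $\ell$ of them are accepting it branches into exactly $c_\ell$ accepting configurations (a balanced tree with $c_\ell$ leaves, which fits in the time budget because $\log c_\ell\le\log s_\ell\le q(n)$; when $c_\ell=0$ this branch just rejects), and otherwise it rejects on this branch. For each fixed $\ell$ the number of accepting branches thereby produced is $\binom{m}{\ell}\,c_\ell$, where $m=\acc_N(x)$, so $\acc_M(x)=\sum_{\ell=0}^{p(n)}c_\ell\binom{m}{\ell}$. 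When $x\notin L$ this is $c_0=0$; when $x\in L$, so $1\le m\le p(n)$, the terms with $\ell>m$ vanish and it equals $\sum_{\ell=0}^{m}c_\ell\binom{m}{\ell}=s_m=v_m\in V\subseteq T$. Hence $f:=\acc_M\in\sharpp$ witnesses $L\in\rcnb{T}$.

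The step I expect to be the crux is the joint requirement on the constants: one must show simultaneously that the multiplicative-gap nongappiness of $V$ forces $s_j$ (hence $c_j$) to stay at most $2^{\poly(n)}$ for all $j\le p(n)$, so that $c_\ell$ accepting paths can be spawned in polynomial depth, and that the P-printability --- and the sparseness it entails --- of $V$ really does let $M$ locate $v_j$, and thus compute $c_j$, within a polynomial time bound. The remaining ingredients (the vanishing of the high-$\ell$ binomial terms, the bookkeeping that makes strictly increasing $\ell$-tuples of accepting paths number exactly $\binom{m}{\ell}$, and the trivial reduction from $\rcnb{T}$ to $\rcnb{V}$) are routine.
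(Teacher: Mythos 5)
Your proof is correct and uses the same iterative constant-setting technique that underlies the paper's Theorem~\ref{t:meta1}, which the paper notes implies Theorem~\ref{t:bhr} as the special case $F(t)=t+\bigo(1)$ with $j$ at its polynomial bound. Your direct growth estimate $s_j = 2^{\bigo(j\log j)}$ and the use of P-printability to locate each $v_j$ in polynomial time are precisely the ingredients that metatheorem packages.
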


From this, we immediately have the following corollary.
\begin{corollary}\label{c:primes-oh-one}
  If $\primes$ contains an $(n+ \bigo(1))$-nongappy, P-printable
  subset, then $\fewp \subseteq \rcnb{\primes}$.
\end{corollary}
Does $\primes$ contain an $(n+\bigo(1))$-nongappy, P-printable subset?
The Bertrand--Chebyshev
Theorem~\cite{che:j:bertrand-chebyshev-theorem} states that for each
natural number $k > 3$, there exists a prime $p$ such that $k<p<2k-2$.
Thus $\primes$ clearly has an $(n+\bigo(1))$-nongappy
subset.\footnote{%
  \label{f:tao-gaps}%
  We mention in passing that it follows from the fact that $\primes$
  clearly \emph{does} have an $(n+\bigo(1))$-nongappy subset that none
  of the powerful results by Ford, Maynard, Tao, and
  others~\cite{for-gre-kon-tao:j:gaps2016,%
    may:j:prime-gaps,for-gre-kon-may-tao:j:gaps2018} about
  ``infinitely often'' lower bounds for gaps in the primes, or in fact
  any results purely about lower bounds on gaps in the primes, can
  possibly prevent there from being a set of primes whose gaps are
  small enough that the set could, if sufficiently accessible, be used
  in a Cai--Hemachandra-type %
  iterative
  constant-setting construction
  seeking to show that $\fewp \subseteq \rcnb{\primes}$.  (In
  fact---keeping in mind that the difference between the value of a number
  and its coded length is exponential---the best such gaps known are
  almost exponentially too weak to preclude a Cai--Hemachandra-type
  iterative constant-setting construction.)  Rather, the only obstacle
  will be the issue of whether there is such a set that in addition is
  computationally easily accessible/thin-able, i.e., whether there is
  such an $(n+\bigo(1))$-nongappy subset of the primes that is
  P-printable.}
Indeed, since---with $p_i$ denoting the $i$th
prime---$(\forall \epsilon > 0)(\exists N)(\forall n>N) [p_{n+1}-p_n <
(p_n)^{\frac{3}{4} + \epsilon}]$~\cite{tch:j:prime-gaps-quite-tight},
it certainly holds that
represented in binary there
are primes at all but a finite number of bit-lengths.  Unfortunately,
to the best of our knowledge it remains an open research issue whether
there exists \emph{any} infinite, P-printable subset of the primes,
much less one that in addition is $(n+\bigo(1))$-nongappy.

In fact, the best sufficient condition we know of for the existence of
an infinite, P-printable set of primes is a relatively strong
hypothesis of Allender~\cite[Corollary 32 and the comment following
it]{all:j:pseudorandom} about the probabilistic complexity class
$\mathrm{RP}$~\cite{gil:j:prob-tms} and the existence of secure extenders.
However, that result does not promise that the infinite, P-printable
set of primes is $(n+\bigo(1))$-nongappy---not even now, when it is
known that primality is not merely in the class $\mathrm{RP}$ but even is in the
class $\p$~\cite{agr-kay-sax:j:primality}.

So the natural question to ask is: Can we at least lower the bar for
what strength of advance---regarding the existence of P-printable sets
of primes and the nongappiness of such sets---%
would suffice to allow $\rcnb{\primes}$ to contain some interesting
ambiguity-limited class?

In particular, the notion of nongappiness used in Theorem~\ref{t:bhr}
above means that our length gaps between adjacent elements of our
P-printable set must be bounded by an additive constant.  Can we
weaken that to allow larger gaps, e.g., gaps of multiplicative
constants, and still have containment for some interesting
ambiguity-limited class?

We show that the answer is yes. More generally, we show that there is
a tension and trade-off between gaps and ambiguity.  As we increase the
size of gaps we are willing to tolerate, we can prove containment
results for restrictive counting classes, but of increasingly small
levels of ambiguity.
On the other hand, as we lower the size of the gaps we are willing to
tolerate, we increase the amount of ambiguity we can handle.

It is easy to see that the case of constant-ambiguity nondeterminism
is so extreme that the iterative constant-setting method works for all
infinite sets regardless of how nongappy they are.  (It is even true
that the containment $\upleq{k} \subseteq \rcnb{T}$ holds for some finite sets $T$, such as
$\{1,2,3,\dots,k\}$;
but our point here is that it
holds for \emph{all} infinite sets
$T\subseteq\naturalnumberpositive$.)

\begin{theorem}\label{t:up-to-k}
  For each infinite set $T \subseteq \naturalnumberpositive$ and for
  each natural $k \geq1$, %
  $\upleq{k} \subseteq \rcnb{T}$.
\end{theorem}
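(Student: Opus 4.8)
The plan is to prove Theorem~\ref{t:up-to-k} by the iterative constant-setting technique, which for constant ambiguity collapses to a trivially solvable triangular recurrence. So fix an infinite $T \subseteq \naturalnumberpositive$ and a natural $k \geq 1$, let $L \in \upleq{k}$, and let $N$ be an NPTM witnessing this: $\acc_N(x) = 0$ when $x \notin L$, and $1 \leq \acc_N(x) \leq k$ when $x \in L$. First I would choose constants $c_1, c_2, \dots, c_k \in \naturalnumberpositive$, inductively in the order $c_1, c_2, \dots, c_k$ (in the bookkeeping of Section~\ref{sec:gaps} this is the case $c_0 = 0$, with ``no''-target $\{0\}$ and ``yes''-target $T$), so that for every $a \in \{1, 2, \dots, k\}$ it holds that $\sum_{\ell=1}^{a} c_\ell \binom{a}{\ell} \in T$. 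The induction goes through because, once $c_1, \dots, c_{a-1}$ are fixed, the quantity $V_a := \sum_{\ell=1}^{a-1} c_\ell \binom{a}{\ell}$ is a determined nonnegative integer and, crucially, $\binom{a}{a} = 1$, so $\sum_{\ell=1}^{a} c_\ell \binom{a}{\ell} = V_a + c_a$; since $T$ is infinite, hence unbounded, I can pick $t \in T$ with $t > V_a$ and set $c_a := t - V_a \in \naturalnumberpositive$, which yields $\sum_{\ell=1}^{a} c_\ell \binom{a}{\ell} = t \in T$.

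Next I would build the $\sharpp$ function witnessing $L \in \rcnb{T}$. Consider the NPTM $N'$ that on input $x$ nondeterministically selects an $\ell \in \{1, \dots, k\}$, then guesses $\ell$ path names of $N$ on $x$ in strictly increasing lexicographic order (so that each size-$\ell$ subset of $N(x)$'s paths is guessed exactly once), checks that all $\ell$ guessed paths are accepting paths of $N$ on $x$, and on each surviving path splits/clones into exactly $c_\ell$ accepting paths. Writing $a := \acc_N(x)$ and using the standard normalization in which $N$ has a fixed polynomial number of computation paths each named by a string of a common length, one gets $\acc_{N'}(x) = \sum_{\ell=1}^{k} c_\ell \binom{a}{\ell}$. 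If $x \notin L$ then $a = 0$, so $\binom{0}{\ell} = 0$ for all $\ell \geq 1$ and $\acc_{N'}(x) = 0$. If $x \in L$ then $1 \leq a \leq k$, and since $\binom{a}{\ell} = 0$ for $\ell > a$ we get $\acc_{N'}(x) = \sum_{\ell=1}^{a} c_\ell \binom{a}{\ell} \in T$ by the choice of the $c_\ell$. Hence $\acc_{N'} \in \sharpp$ is exactly the kind of function required by the definition of $\rcnb{T}$, so $L \in \rcnb{T}$.

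The ``hard part'' here is essentially nonexistent: the only thing that needs checking is that the recurrence defining the $c_a$ is always solvable, and that is immediate because the coefficient of the newest constant is the Pascal-triangle diagonal entry $\binom{a}{a} = 1$ (so the linear system is triangular with unit diagonal) and because an infinite subset of $\naturalnumberpositive$ is unbounded, so a target $t \in T$ lying above the already-fixed partial sum $V_a$ always exists. It is precisely this step that stops being free once one moves to superconstant ambiguity against gappy target sets: there one must simultaneously bound the growth of the $c_\ell$'s across the rounds and guarantee that the required ``yes''-targets genuinely occur within the sparser set, which is the work carried out in the metatheorems (Theorems~\ref{t:meta1} and~\ref{t:meta2}) and their corollaries.
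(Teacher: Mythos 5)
Your proposal is correct and takes essentially the same approach as the paper's own proof: iterative constant-setting with hard-coded constants $c_1,\dots,c_k$ chosen so that each partial sum $\sum_{\ell=1}^{a} c_\ell\binom{a}{\ell}$ lands in $T$ (solvable because $\binom{a}{a}=1$ and $T$ is unbounded), paired with the standard machine that guesses a cardinality-$\ell$ set of accepting paths of the witness and clones into $c_\ell$ accepting paths. The only cosmetic difference is that the paper allows $c_i = 0$ by taking the least $a_i \in T$ with $a_i \geq b_i$, whereas you insist on $c_a \geq 1$ by taking $t > V_a$ strictly; both work equally well.
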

Theorem~\ref{t:up-to-k} should be compared with the
discussion by %
Hemaspaandra and Rothe~\cite[p.~210]{hem-rot:j:rice} of
an NP-many-one-hardness result of Borchert and
Stephan~\cite{bor-ste:j:rice} and a $\upleq{k}$-1-truth-table-hardness
result.
In
particular, both those results are in the \emph{un}restricted setting,
and so neither implies Theorem~\ref{t:up-to-k}.

The proof of Theorem~\ref{t:up-to-k} is in
Appendix~A\@.
However, we recommend that the reader read
it, if at all, only after reading the proof of Theorem~\ref{t:meta1},
whose proof also uses (and within this paper, is the first
presentation of) iterative constant-setting, and is a more
interesting use of that approach.

\begin{corollary}\label{c:infinite-big-o}
  For each infinite set $T \subseteq \naturalnumberpositive$,
  $\upleq{\bigo(1)} \subseteq \rcnb{T}$.
\end{corollary}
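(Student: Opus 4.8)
The plan is to derive this corollary immediately from Theorem~\ref{t:up-to-k} together with the identity $\upleq{\bigo(1)} = \bigcup_{k \in \naturalnumberpositive} \upleq{k}$ recorded in Section~\ref{s:defs}. So there is essentially no new work to do: the corollary is just Theorem~\ref{t:up-to-k} repackaged, using the fact that constant-ambiguity is a union over the finite levels.

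Concretely, I would fix an arbitrary infinite set $T \subseteq \naturalnumberpositive$ and an arbitrary language $L \in \upleq{\bigo(1)}$. By the definition of $\upleq{\bigo(1)}$ (equivalently, by $\upleq{\bigo(1)} = \bigcup_{k \in \naturalnumberpositive} \upleq{k}$), there is some $k \in \naturalnumberpositive$ with $L \in \upleq{k}$. Theorem~\ref{t:up-to-k} gives $\upleq{k} \subseteq \rcnb{T}$ for that $k$, hence $L \in \rcnb{T}$. Since $L$ was arbitrary, $\upleq{\bigo(1)} \subseteq \rcnb{T}$, and since $T$ was arbitrary, the corollary follows.

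There is no real obstacle here; all the content lives in the proof of Theorem~\ref{t:up-to-k} (deferred to Appendix~\ref{a-now-nonsub-section:up-to-k}), where the iterative constant-setting construction is carried out for a single fixed ambiguity bound $k$ against an arbitrary infinite, hence unboundedly sparse, target set $T$. The only point worth stating explicitly in the write-up is the appeal to the union characterization of $\upleq{\bigo(1)}$, so that a reader does not mistake this for a claim requiring a uniform choice of $k$ across all of $\upleq{\bigo(1)}$ at once; the containment holds level-by-level, and that suffices.
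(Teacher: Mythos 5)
Your proposal is correct and matches the paper's approach exactly: the paper also derives Corollary~\ref{c:infinite-big-o} immediately from Theorem~\ref{t:up-to-k} via the identity $\upleq{\bigo(1)} = \bigcup_{k \in \naturalnumberpositive} \upleq{k}$ noted in Section~\ref{s:defs}. Your explicit remark that the containment need only hold level-by-level, not via a uniform $k$, is a sensible clarification but not a deviation in substance.
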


So constant-ambiguity nondeterminism can be done by the restrictive
counting class based on the primes (as Corollary~\ref{c:infinite-big-o} immediately yields $\upleq{\bigo(1)} \subseteq \rcnb{\primes})$.
However, what we are truly
interested in is whether we can achieve a containment for
superconstant levels of ambiguity.  We in fact
can do so, and we now present
such results for
a
range of cases between constant ambiguity ($\upleq{\bigo(1)}$) and
polynomial ambiguity ($\fewp$).
Just as Corollary~\ref{c:primes-oh-one} follows from Theorem~\ref{t:bhr}, so also do each of our Theorems~\ref{t:kn} and \ref{t:tradeoff}, Parts~1--3 each have the obvious analogous corollary regarding $\rcnb{\primes}$.

We first define a broader notion of nongappiness.
\begin{definition}\label{d:nongappy-main}
  Let $F$ be any function mapping $\realnumberpositive$ to
  $\realnumberpositive$.
A set $S \subseteq \naturalnumberpositive$
  is $F$-nongappy if $S \neq \emptyset$ and
  $(\forall m\in S)(\exists m'\in S) [m' > m \land |m'| \leq F(|m|)]$.\footnote{In two later definitions, \ref{d:4-8} and~\ref{def:polylog-nongappy},
we apply Definition~\ref{d:nongappy-main}
to classes of functions.  In each case,
we will directly define that, but in fact will do so as the natural
lifting (namely, saying a set is $\calf$-nongappy exactly if there is an
$F \in \calf$ such that the set is $F$-nongappy).  The reason we do not
directly define lifting as applying to all classes $\calf$ is in small
part
that we need it only in those two definitions, and in large part because
doing so could cause confusion, since an
earlier definition (Def.~\ref{def: n+O(1)-nongappy}) that is connecting
to earlier work is using as a syntactic notation an
expression that itself would be caught up by such a lifting (though
the definition given in
Def.~\ref{def: n+O(1)-nongappy}
is consistent with the lifting reading,
give or take the fact that we've now broadened
our focus to the reals rather than the
naturals).}
\end{definition}

This definition sets $F$'s domain and codomain to include real numbers,
despite the fact that the underlying $F$-nongappy set $S$
is of the type $S\subseteq \naturalnumberpositive$.
The codomain is set to include real numbers because many notions of nongappiness we examine rely on non-integer values.
Since we are often iterating functions, we thus set $F$'s domain to be real numbers as well.
Doing so does not cause problems as to computability because $F$ is a function
that is never actually computed by the Turing machines in our proofs; it is merely one
that is mathematically reasoned about in the analysis of the nongappiness of sets
underpinning restricted counting classes.

The following theorem generalizes the iterative constant-setting
technique that Borchert, Hemaspaandra, and Rothe used to prove
Theorem~\ref{t:bhr}.
\begin{theorem}\label{t:meta1}
  Let $F$ be a function mapping from $\realnumberpositive$ to
  $\realnumberpositive$ and let $n_0$ be a positive natural number
  such that $F$ restricted to the domain
  $\{t \in \realnumberpositive \condition t \geq n_0\}$ is
  nondecreasing and for all $t \geq n_0$ we
  have %
  (a)~$F(t) \geq t + 2$ and
  (b)~$(\forall c \in \naturalnumberpositive)[cF(t) \geq F(ct)]$.
  Let $j$ be a function, mapping from $\naturalnumber$ to
  $\naturalnumberpositive$,
that is at most polynomial in the value of its
  input and is computable in time polynomial in the value of its
  input.
  Suppose $T \subseteq \naturalnumberpositive$
  has an $F$-nongappy, P-printable subset $S$. Let $\lambda = 4 + |s|$
  where $s$ is the smallest element of $S$ with $|s| \geq n_0$.
  If for some $\beta \in \naturalnumberpositive$,
  ${F^{[j(n)]}(\lambda)} = \bigo(n^\beta)$, then
  $\upleq{j(n)} \subseteq \rcnb{T}$.
\end{theorem}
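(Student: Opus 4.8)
The plan is to generalize the iterative constant-setting argument of Borchert, Hemaspaandra, and Rothe (the proof of Theorem~\ref{t:bhr}), but with far more careful quantitative bookkeeping. Fix a language $L \in \upleq{j(n)}$, witnessed by an NPTM $N$ that runs within some polynomial time bound and satisfies $\acc_N(x) = 0$ when $x \notin L$ and $1 \le \acc_N(x) \le j(|x|)$ when $x \in L$. It suffices to build an NPTM $M$ with $\acc_M(x) = 0$ for $x \notin L$ and $\acc_M(x) \in T$ for $x \in L$; then $\acc_M \in \sharpp$ witnesses $L \in \rcnb{T}$.

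\emph{Iterative constant-setting.} For each input length $n$ I would define nonnegative integers $c_0, c_1, \dots, c_{j(n)}$ as follows. Set $c_0 = 0$. Having fixed $c_0, \dots, c_{a-1}$, the quantity $w_a := \sum_{\ell=0}^{a-1} c_\ell \binom{a}{\ell}$ is determined; let $v_a$ be the least element of $S$ with $v_a \ge w_a$, and put $c_a := v_a - w_a \ge 0$. For the base case take $v_1 = s$, the smallest element of $S$ with $|s| \ge n_0$ (it exists because an $F$-nongappy set is infinite), so $c_1 = s$. Then by construction $\sum_{\ell=0}^{a} c_\ell \binom{a}{\ell} = v_a \in S \subseteq T$ for every $1 \le a \le j(n)$, while this same sum equals $c_0 = 0$ when $a = 0$. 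Existence of $v_a$, together with a bound on its length, comes from $F$-nongappiness: starting from $v_{a-1} \in S$ (note $v_{a-1} \le w_a$) and repeatedly invoking the nongappiness condition produces a strictly increasing sequence within $S$; its first member that is $\ge w_a$, say $m$, has a predecessor in that sequence of value $< w_a$ and hence of length $\le |w_a|$, so monotonicity of $F$ gives $|m| \le F(|w_a|)$, and $v_a \le m$.

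\emph{The machine.} On input $x$ with $n = |x|$, $M$ first deterministically computes $j(n)$ and the constants $c_0, \dots, c_{j(n)}$, using the P-printing algorithm for $S$ (this step needs a polynomial a priori bound on $|v_{j(n)}|$, which the growth analysis below supplies). Then $M$ nondeterministically picks $\ell \in \{1, \dots, j(n)\}$, guesses $\ell$ lexicographically increasing candidate computation paths of $N$ on $x$, simulates $N$ along each, and, if all $\ell$ of them are accepting, fans out into exactly $c_\ell$ accepting leaves (otherwise it rejects along that branch). For $x$ with $a := \acc_N(x)$ the number of valid increasing $\ell$-tuples is $\binom{a}{\ell}$, so $\acc_M(x) = \sum_{\ell=1}^{j(n)} c_\ell \binom{a}{\ell} = \sum_{\ell=1}^{a} c_\ell \binom{a}{\ell}$, which equals $v_a \in T$ when $a \ge 1$ and $0$ when $a = 0$. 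Hence $M$ behaves as required, provided it runs in polynomial time; since $j$ is polynomially bounded and polynomial-time computable, this reduces to the single requirement that the $c_\ell$ have polynomially bounded bit-length (the binomials $\binom{a}{\ell}$ with $a \le j(n)$, the at most $j(n)$ simulations of $N$, the fan-out into $c_\ell$ leaves, and computing the $c_\ell$ from the P-printer for $S$ are then all polynomial-time).

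\emph{Growth analysis (the crux).} What remains is to bound $|v_{j(n)}|$ by a polynomial in $n$. Unwinding the recursion, a crude estimate gives $w_{a+1} \le 2^{a+1} v_a$, hence $|w_{a+1}| \le |v_a| + a + 1$, hence $|v_{a+1}| \le F(|v_a| + a + 1)$. The plan is to prove by induction on $a$ a clean bound of the form $|v_a| \le F^{[a]}(\lambda)$ (up to lower-order corrections that stay polynomial in $a$): hypothesis~(a) makes the iterates of $F$ strictly increasing and already at least $\lambda + 2a$ by stage $a$, which is exactly the slack that the extra ``$+4$'' in $\lambda$ is held in reserve for; hypothesis~(b) lets the multiplicative overhead coming from the binomial coefficients, and the additive ``$a+1$'' recast as a multiplicative factor, be pulled back out through $F$ at no worse than the same cost; and monotonicity of $F$ closes the inductive step. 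The hypothesis $F^{[j(n)]}(\lambda) = \bigo(n^\beta)$ then forces $|v_{j(n)}|$ to be polynomially bounded in $n$, so, since $v_a$ is increasing and $j(n)$ is polynomially bounded, every $c_\ell \le v_\ell \le v_{j(n)}$ has polynomially bounded bit-length, which completes the argument. I expect this inductive estimate — the analysis of value growth under iterated functions — to be the only genuine obstacle; hypotheses (a) and (b) and the padding in $\lambda$ are present precisely so that this one induction can be pushed through uniformly for every $F$ of the permitted kind.
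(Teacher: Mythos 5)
Your proposal follows essentially the same route as the paper's own proof: the same iterative constant-setting with $c_a := v_a - w_a$, the same NPTM construction that guesses a size-$\ell$ set of paths of the witnessing machine and fans each successful guess into $c_\ell$ accepting leaves, and the same growth analysis in which hypothesis~(a) supplies the $\lambda + 2a$ lower bound on iterates, hypothesis~(b) pulls multiplicative factors back out through $F$, and the resulting bound is of the shape $\mathrm{poly}(a)\cdot F^{[a]}(\lambda)$ (the paper's explicit version is $|c_i|\leq (i-1)F^{[i-1]}(\lambda)$), which combined with the hypothesis $F^{[j(n)]}(\lambda)=\bigo(n^\beta)$ and the polynomial bound on $j$ keeps everything polynomial. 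The induction you sketch but do not fully carry out does go through exactly as you forecast, so the proposal is correct and matches the paper's argument.
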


This theorem has a nice interpretation: a sufficient condition for an
ambiguity-limited class $\upleq{j(n)}$ to be contained in a particular
restricted counting class is for there to be at least $j(n)$
elements that are reachable in polynomial time in an
$F$-nongappy subset
of the set that defines the counting class, assuming that the
nongappiness of the counting class and the ambiguity of the
$\upleq{j(n)}$ class satisfy the above conditions.

\begin{proof}[Proof of Theorem~\ref{t:meta1}]
Let $F$, $j$, $n_0$, $T$, and $S$ be as per the theorem statement.
Suppose $(\exists \beta' \in \naturalnumberpositive)[{F^{[j(n)]}(\lambda)} = \bigo(n^{\beta'})]$,
and fix a value  $\beta \in \naturalnumberpositive$ such that
$F^{[j(n)]}(\lambda) = \bigo(n^{\beta})$.

We start our proof by defining three sequences of constants that will be
central in our iterative constant-setting argument,
and giving bounds on their growth.
Set $c_1$ to be the least element of $S$ with $|c_1| \geq n_0$.
For $n \in \{2, 3, \ldots\}$, given $c_1, c_2, \ldots, c_{n-1}$, we set
\begin{equation}
    \label{eq:b_n}
    b_n = \sum_{1 \leq \ell \leq n-1} c_\ell {n \choose \ell}.
\end{equation}
With $b_n$ set, we define $a_n$ to be the least element of $S$
such that $a_n > b_n$. Finally, we set $c_n = a_n - b_n$.
We now show that $\max_{2 \leq \ell \leq j(n)} |a_\ell|$ and $\max_{1 \leq \ell \leq j(n)} |c_\ell|$ are both at most polynomial in $n$.
Take any $i \in \{2, 3, \ldots\}$.
By the construction above and since $S$ is $F$-nongappy,
we have $|c_i| \leq |a_i| \leq F(|b_i|)$.
Using our definition of $b_i$ from Eq.~\ref{eq:b_n} we
get $b_i = \sum_{1 \leq k \leq i-1} c_k {i \choose k} \leq
(i-1)(\max_{1 \leq k \leq i-1} c_k){i \choose \lceil \frac{i}{2} \rceil} \leq
(\max_{1 \leq k \leq i-1} c_k)(2^{2i})$.
Thus we can bound the length of $b_i$ by
$|b_i| \leq 2i + \max_{1 \leq k \leq i-1} |c_k|$.
Since this is true for all $i \in \{2, 3, \ldots\}$,
it follows that if $\max_{1 \leq \ell \leq j(n)} |c_\ell|$ is at most polynomial
in $n$, then $\max_{2 \leq \ell \leq j(n)} |b_\ell|$ is at most polynomial in $n$, and
since for all $i$, $a_i = b_i + c_i$, $\max_{2 \leq \ell \leq j(n)} |a_\ell|$ is at most polynomial in $n$.

We now show that $\max_{1 \leq \ell \leq j(n)} |c_\ell|$ is in fact polynomial in $n$ via the following claim, which we prove by induction: for all $i \in \{2, 3, \ldots\}$,
$\max_{1 \leq \ell \leq i} |c_\ell| \leq (i-1) F^{[(i-1)]}(\lambda)$.
We showed above that for any $i \in \{2, 3,
\ldots\}$, $|c_i| \leq F(|b_i|)$ and $|b_i| \leq 2i + \max_{1 \leq k \leq i-1}
|c_k|$. For each $i \in \{2, 3, \ldots\}$, we have $|b_i| \geq |c_1| \geq n_0$.
Since $F$ restricted to $\{t \in \realnumberpositive \condition t \geq n_0\}$ is nondecreasing, for all $i \in \{2, 3, \ldots, n\}$,
\begin{equation} \label{eq:c_i_bound}
    |c_i| \leq F(|b_i|) \leq F(2i + \max_{1 \leq k \leq i-1} |c_k|).
\end{equation}
Recall that $\lambda = 4 + |c_1|$.  For the base case of our induction, notice
that substituting $i = 2$ into Eq.~\ref{eq:c_i_bound} gives $|c_2| \leq F(4 +
|c_1|) = F(\lambda)$.
Also, by condition (a) and the fact that $|c_1| \geq n_0$ we have $|c_1| < F(|c_1|) \leq F(4 + |c_1|) = F(\lambda)$.
Thus $\max_{1 \leq \ell \leq 2} |c_\ell| \leq F(\lambda)$, which is the claimed inequality for $i = 2$.
Suppose, for induction, that the claim holds for some $i \geq 2$.
Since $|c_1| \geq n_0$, condition (a) and the fact that $F$ restricted to $\{t \in \realnumberpositive \condition t \geq n_0\}$ is nondecreasing give us $F^{[i-1]}(\lambda) = F^{[i-1]}(4+|c_1|) \geq 2(i-1) + 4 + |c_1| \geq 2(i+1)$.
Plugging $i+1$ into Eq.~\ref{eq:c_i_bound} and using the fact that all the inputs to $F$ involved are greater than $n_0$ (and so are in the domain where $F$ is nondecreasing), we have
\begin{align*}
    |c_{i+1}| &\leq F(2(i+1) + \max_{1 \leq \ell \leq i} |c_\ell|) \\
    &\leq F(2(i+1) + (i-1)F^{[i-1]}(\lambda)) \tag{by the inductive hypothesis} \\
    &\leq F(F^{[i-1]}(\lambda) + (i-1)F^{[i-1]}(\lambda)) \\
    &= F(iF^{[i-1]}(\lambda)) \\
    &\leq iF^{[i]}(\lambda) \tag{by condition (b)}.
\end{align*}
Since $\max_{1\leq \ell \leq i} |c_\ell| \leq (i-1)F^{[i-1]}(\lambda) \leq i F^{[i]}(\lambda)$, we have $\max_{1 \leq \ell \leq i+1} |c_\ell| \leq i F^{[i]}(\lambda)$, which is
the claimed inequality for $i+1$. By induction,
the claim holds.

Substituting $i = j(n)$ into the inequality we just proved, we get $\max_{1 \leq \ell \leq j(n)} |c_\ell| \leq j(n)F^{[j(n)]}(\lambda)$.
Since $F^{[j(n)]}(\lambda) = \bigo(n^\beta)$ and by hypothesis $j(n)$ is at most polynomial in $n$, $j(n)F^{[j(n)]}(\lambda)$ is at most polynomial in $n$. Hence $\max_{1 \leq \ell \leq j(n)} |c_\ell|$ is at most polynomial in $n$.

We now proceed to show that $\upleq{j(n)} \subseteq \rcnb{T}$.
Let $L$ be a language in $\upleq{j(n)}$, witnessed by an NPTM $\hat{N}$.
To show $L \in \rcnb{T}$ we give a description of an NPTM $N$ that,
on each input $x$, has 0 accepting paths if $x \notin L$,
and has $\acc_N(x) \in T$ if $x \in L$.
On input $x$, our machine $N$ computes $j(|x|)$ and then computes the constants $c_1, c_2, \ldots, c_{j(|x|)}$ as described above.
Note that this computation relies on the P-printability of $S$, which ensures that the constants $a_i$ (which must be computed to compute $c_i$) are computable.
Then $N$ nondeterministically guesses an integer $i \in \{1, 2, \ldots, j(|x|)\}$,
and nondeterministically guesses a cardinality-$i$ set of paths of $\hat{N}(x)$.
If all the paths guessed in a cardinality-$i$ set are accepting paths, then $N$ branches into $c_i$ accepting paths;
otherwise, that branch of $N$ rejects.
Of course, if $\hat{N}(x)$ has fewer than $i$ paths, then the subtree of $N$ that guessed
$i$ will have zero accepting paths, since we cannot guess $i$ distinct paths of $\hat{N}(x)$. We claim that $N$ shows $L \in \rcnb{T}$.

Consider any input $x$. If $x \notin L$, then clearly for all $i \in \{1,2,\ldots,j(|x|)\}$ each cardinality-$i$ set of paths of $\hat N$ guessed will have at least one rejecting path,
and so $N$ will have no accepting path.
Suppose $x \in L$. Then $\hat{N}$ must have some number of accepting paths $k$.
Since $\hat{N}$ witnesses $L \in \upleq{j(n)}$, we must have $1 \leq k \leq j(|x|)$.
Our machine $N$ will have $c_1$ accepting paths for each accepting path of $\hat N$, $c_2$ additional accepting paths for each pair of accepting paths of $\hat N$, $c_3$ additional accepting paths for each triple of accepting paths of $\hat N$, and so on.
Of course, for any cardinality-$i$ set where $i > k$, at least one of the paths must be rejecting, and so $N$ will have no accepting paths from guessing each $i > k$.
Thus we have $\acc_N(x) = \sum_{1 \leq \ell \leq k} c_\ell {k \choose \ell}$.
If $k = 1$, we have $\acc_N(x) = c_1$. If $2 \leq k \leq j(|x|)$, then
$\acc_N(x) = c_k + \sum_{1 \leq \ell \leq k-1} c_\ell {k \choose \ell} = c_k + b_k = a_k$.
In either case, $\acc_N(x) \in S$, and hence $\acc_N(x) \in T$.
To complete our proof for $L \in \rcnb{T}$ we need to check that $N$ is an NPTM.

Note that, by assumption, $j(|x|)$ can be computed in time polynomial in $|x|$.
Furthermore, the value $j(|x|)$ is at most polynomial in $|x|$, and so $N$'s simulation of each cardinality-$i$ set of paths of $\hat N$ can be done in time polynomial in $|x|$.
Since $S$ is P-printable and $\max_{1 \leq i \leq j(|x|)} |a_i|$ is at most polynomial in $|x|$, finding the constants $a_i$ can be done in time polynomial in $|x|$.
Also, since $\max_{1 \leq i \leq j(|x|)} |c_i|$ is at most polynomial in $|x|$, the addition and multiplication to compute each $c_i$ can be done in time polynomial in $|x|$.
All other operations done by $N$ are also polynomial-time, and so $N$ is an NPTM.
\end{proof}

It is worth noting that in general iterative constant-setting proofs it is
sometimes useful to have a nonzero constant $c_0$
in order to add a constant number $c_0{i \choose 0} = c_0$ of accepting paths.
However, when trying to show
containment in a restricted counting class (as is the case here), we set
$c_0 = 0$ to ensure that $\acc_N(x) = 0$ if $x \notin L$, and so we
do not even have a $c_0$ but rather start iterative
constant-setting and its sums with the
$c_1$ case (as in Eq.~\ref{eq:b_n}).

Theorem~\ref{t:meta1} can be applied to get
complexity-class
containments.
In particular, we now define a notion of nongappiness based on a
multiplicative-constant increase in lengths, and we show---as
Theorem~\ref{t:kn}---that this
notion of nongappiness allows us to accept all sets of logarithmic
ambiguity.

\begin{definition}\label{d:4-8}
  A set $S \subseteq \naturalnumberpositive$ is $\bigo(n)$-nongappy if
  $S\neq \emptyset$ and
  $(\exists f \in \bigo(n))(\forall m\in S)(\exists m'\in S) [m'>m
  \land |m'| \leq f(|m|)]$.
\end{definition}

As Fact~\ref{f:O(n)-ng-equiv}
we note that
one can view this definition in a form similar to Borchert,
Hemaspaandra, and Rothe's definition to see that $\bigo(n)$-nongappy
sets are, as to the increase in the lengths of consecutive elements,
bounded by a multiplicative constant. (In terms of values, this means
that the gaps between the values of one element of the set and the
next are bounded by a polynomial increase.)

\begin{fact} \label{f:O(n)-ng-equiv} A set
  $S \subseteq \naturalnumberpositive$ is $\bigo(n)$-nongappy if and
  only if there exists $k \in \naturalnumberpositive$ such that $S$ is $kn$-nongappy.
\end{fact}
\begin{proof}
  As to the ``only if'' direction, suppose $S$ is $\bigo(n)$-nongappy.
  By definition of $\bigo(n)$-nongappy, $S \neq \emptyset$ and there is a function $f \in \bigo(n)$
  such that $(\forall m \in S)(\exists m' \in S)[m' > m \land |m'| \leq f(|m|)]$.
  Since $f$ is $\bigo(n)$, we can find constants $k_0$ and $n_0$ in $\naturalnumberpositive$
  such that $(\forall t \geq n_0)[f(t) \leq k_0 t]$. Let $k = \max(f(1), f(2), \ldots, f(n_0), k_0)$.
  Then for all $n \in \naturalnumberpositive$, $f(n) \leq kn$.
(Though $f$ has domain $\realnumberpositive$, it is enough to have this bound
  just for the positive natural numbers since our definition of nongappy only invokes the function on positive naturals.)
  It is easy to see that this $k$ is such that $S$ is $kn$-nongappy.

  As to the ``if'' direction,
  $kn$ is certainly $\bigo(n)$.
\end{proof}%

\begin{theorem}
  \label{t:kn}
  If $T \subseteq \naturalnumberpositive$
  has an
  $\bigo(n)$-nongappy, P-printable subset, then
  $\upleq{\bigo(\log n)} \subseteq \rcnb{T}$.
\end{theorem}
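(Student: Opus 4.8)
The plan is to obtain Theorem~\ref{t:kn} as a direct instance of Theorem~\ref{t:meta1}, taking the linear function $F(t)=kt$ for a suitable integer constant $k$. Since $T$ has an $\bigo(n)$-nongappy, P-printable subset $S$, Proposition~\ref{p:O(n)-ng-equiv} gives some $k\in\naturalnumberpositive$ for which $S$ is $kn$-nongappy, and because $kn$-nongappiness is preserved under enlarging $k$ we may assume $k\geq 2$. Set $F(t)=kt$ and $n_0=2$. The hypotheses of Theorem~\ref{t:meta1} are then routine to check: $F$ is nondecreasing on $\{t\in\realnumberpositive \condition t\geq 2\}$; for $t\geq 2$ we have $F(t)=kt\geq 2t\geq t+2$, giving condition~(a); and $cF(t)=ckt=F(ct)$ for every $c\in\naturalnumberpositive$, giving condition~(b) with equality. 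Moreover, by Definition~\ref{d:nongappy-main} the statement ``$S$ is $F$-nongappy'' is literally the statement ``$S$ is $kn$-nongappy,'' which holds. Finally, an $\bigo(n)$-nongappy set is automatically infinite (every element has a strictly larger one in the set), so $S$ is unbounded in length and the constant $\lambda=4+|s|$ of Theorem~\ref{t:meta1}, with $s$ the shortest element of $S$ of length $\geq n_0=2$, is well defined.

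Next I would reduce the class-level inclusion to single-function inclusions. Since $\upleq{\bigo(\log n)}=\bigcup_{f\in\bigo(\log n)}\upleq{f(n)}$, it suffices to show $\upleq{f(n)}\subseteq\rcnb{T}$ for each $f\in\bigo(\log n)$. Given such an $f$, choose $d\in\naturalnumberpositive$ large enough that $f(n)\leq g(n):=d(\floor{\log n}+1)$ for all $n\in\naturalnumber$; the $\bigo(\log n)$ bound handles all large $n$, and $d$ can be taken large enough to dominate the finitely many remaining values, and $g$ is already integer-valued and at least $1$ because in this paper $\log$ abbreviates $\log(\max(1,\cdot))$. This $g$ maps $\naturalnumber$ to $\naturalnumberpositive$, is $\bigo(\log n)$ and hence at most polynomial in the value of its input, and is computable in time polynomial in (indeed polylogarithmic in) its input value; so $g$ is a legitimate choice for the function $j$ in Theorem~\ref{t:meta1}. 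Since $f\leq g$ pointwise we have $\upleq{f(n)}\subseteq\upleq{g(n)}$, so it is enough to prove $\upleq{g(n)}\subseteq\rcnb{T}$.

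For the last step I would verify the growth hypothesis of Theorem~\ref{t:meta1} with $j=g$. Because $F$ is linear, $F^{[m]}(\lambda)=k^{m}\lambda$, so $F^{[g(n)]}(\lambda)=k^{g(n)}\lambda$. Using $g(n)\leq d\log n+d$ one gets $k^{g(n)}\leq k^{d}\,k^{d\log n}=k^{d}\,n^{d\log k}$, hence $F^{[g(n)]}(\lambda)=\bigo(n^{\beta})$ with $\beta=\ceiling{d\log k}$, which is a positive natural number since $k\geq 2$ and $d\geq 1$. Theorem~\ref{t:meta1} then gives $\upleq{g(n)}\subseteq\rcnb{T}$, hence $\upleq{f(n)}\subseteq\rcnb{T}$, and taking the union over all $f\in\bigo(\log n)$ finishes the proof.

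As for the main difficulty: there is essentially none, since all of the substantive iterative constant-setting work is carried by Theorem~\ref{t:meta1}. The only points needing a little care are (i) the passage from the \emph{class} $\bigo(\log n)$ to a single dominating function that genuinely maps $\naturalnumber$ to $\naturalnumberpositive$, is at most polynomial in its input value, and is value-polynomial-time computable (so that Theorem~\ref{t:meta1} literally applies), and (ii) ensuring condition~(a), i.e.\ $F(t)\geq t+2$ on the tail, which is exactly what forces the harmless normalization $k\geq 2$. Both are immediate.
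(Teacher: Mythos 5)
Your proposal is correct and follows essentially the same route as the paper's own proof: reduce to Theorem~\ref{t:meta1} with $F(t)=kt$ via Proposition~\ref{p:O(n)-ng-equiv} (normalizing $k\geq 2$ so condition~(a) holds), dominate a given $f\in\bigo(\log n)$ by an integer-valued logarithmic $j$ that satisfies the polynomial-in-value computability and size requirements, and verify $F^{[j(n)]}(\lambda)=k^{j(n)}\lambda=\bigo(n^\beta)$ by a direct exponent calculation. The only differences from the paper are cosmetic choices (the paper uses $j(n)=\floor{d\log(n+2)}$ and $\beta$ the least integer exceeding $2d\log k+\log\lambda$, you use $j(n)=d(\floor{\log n}+1)$ and $\beta=\ceiling{d\log k}$), and your added observation that an $\bigo(n)$-nongappy set is infinite (so that $\lambda$ is well defined) is a harmless, correct extra.
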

\begin{proof}
  Suppose $T \subseteq \naturalnumberpositive$ has an $\bigo(n)$-nongappy, P-printable subset.
  By the ``only if'' direction of
  Fact~\ref{f:O(n)-ng-equiv},
  there exists a $k \in \naturalnumberpositive$ such that $T$ has a
  $kn$-nongappy, P-printable subset.  We can assume $k \geq 2$ since
  if a set has a $1n$-nongappy, P-printable subset then it also has a
  $2n$-nongappy, P-printable subset.
  Let $F: \realnumberpositive \to \realnumberpositive$ be the function $F(t) = kt$.
  The function $F$
  satisfies the conditions from %
  Theorem~\ref{t:meta1} since for all $t \geq 2$,
  $F(t) = kt \geq t+2$, $(\forall c \in \naturalnumberpositive)[cF(n) = ckn = F(cn)]$, and $F$ is
  nondecreasing on $\realnumberpositive$. Let $\lambda = 4+|s|$ where $s$ is the smallest element of the $kn$-nongappy, P-printable subset of $T$ such that the conditions on $F$ hold for all $t \geq |s|$, i.e.,
  $s$ is the smallest element of the $kn$-nongappy, P-printable subset of $T$ such that $|s| \geq 2$.
For any function $g: \naturalnumber \to \realnumberatleastone$
    satisfying
$g(n) = \bigo(\log n)$ it is not
hard to see (since for each
natural $n$ it holds that $\log(n+2)\geq 1$) that there must exist some
$d \in \naturalnumberpositive$ such that
$(\forall n\in \naturalnumberpositive)[g(n)\leq d \log(n+2)]$,
and
hence
$\upleq{g(n)} \subseteq \upleq{d \log(n+2)} = \upleq{\floor{d \log(n+2)}}$.  Additionally,
  $j(n) = \floor{d \log(n+2)}$ satisfies the conditions from
  Theorem~\ref{t:meta1} since $j(n)$
can be
  computed in time polynomial in $n$
  (for example by doing a linear search for the largest $i \in \naturalnumber$ such that $2^i \leq (n+2)^d$)
  and has value at most polynomial
  in $n$.
Applying Theorem~\ref{t:meta1}, to prove that
  $\upleq{j(n)} \subseteq \rcnb{T}$ it suffices to show that there is
  some $\beta \in \naturalnumberpositive$ such that
  ${F^{[j(n)]}(\lambda)} = \bigo(n^\beta)$.
So it suffices to
  show that for some $\beta \in \naturalnumberpositive$ and for sufficiently large $n$, ${F^{[j(n)]}(\lambda)} \leq n^\beta$. Note
  that $F^{[j(n)]}(\lambda) = k^{j(n)}\lambda$. So it is enough to
  show that there exists $\beta$ such that for sufficiently large $n$,
  $k^{j(n)}\lambda \leq n^\beta$, or (taking logs)
  equivalently that for large $n$,
  $\floor{d \log(n+2)} \log k + \log \lambda \leq \beta \log n$.
  The left-hand side of this inequality is at most $d\log(n+2)\log k + \log \lambda \leq \log(n)[2d\log k + \log\lambda]$ which, for all $\beta \geq 2d\log k + \log \lambda$, is at most $\beta \log n$.
  Thus there exits a constant $\beta$ such that $F^{[j(n)]}(\lambda) = \bigo(n^\beta)$.
Hence for any function $g: \naturalnumber \to \realnumberatleastone$ satisfying
$g(n) = \bigo(\log n)$
we have
  that there exists a function $j$
such
  that $\upleq{g(n)} \subseteq \upleq{j(n)} \subseteq \rcnb{T}$.
\end{proof}

In order for the iterative constant-setting approach used in
Theorem~\ref{t:meta1} to be applicable, it is clear that we need to
consider $\up$ classes that have at most polynomial ambiguity, because
otherwise the
constructed NPTMs could not guess large enough collections of paths
within polynomial time. Since in the statement
of Theorem~\ref{t:meta1} we use the function
$j$ to denote the ambiguity of a particular $\up$ class, this requires
$j$ to be at most polynomial in the value of its input. Furthermore,
since our iterative constant-setting requires having a bound on the
number of accepting paths the $\up$ machine could have had on a
particular string, we also need to be able to compute the function $j$
in time polynomial in the value of its input. Thus the limitations on
the function $j$ are natural and seem difficult to remove.
Theorem~\ref{t:meta1} is flexible enough to, by a
proof similar to that of Theorem~\ref{t:kn}, imply Borchert,
Hemaspaandra, and Rothe's result stated in Theorem~\ref{t:bhr} where
$j$ reaches its polynomial bound.
Another limitation of Theorem~\ref{t:meta1} is that it requires
that for all $t$ greater than or equal to a fixed constant $n_0$,
($\forall c \in \naturalnumberpositive)[cF(t) \geq F(ct)]$.
It is
possible to prove a similar result where
for all $t$ greater than or equal to a fixed constant $n_0$,
$(\forall c \in \realnumberatleastone)[cF(t) \leq F(ct)]$,
which
we now do as Theorem~\ref{t:meta2}.
For
each of
the $F$-nongappy
set classes that seemed most interesting to us,
one of these two conditions turned out to hold for $F$, and so one of
the two results shown in Theorems~\ref{t:meta1} and~\ref{t:meta2} was
applicable in finding the ambiguity-limited class that is
contained in a restricted counting class associated with a set of
natural numbers with some $F$-nongappy, P-printable subset.

\begin{theorem}\label{t:meta2}
  Let $F$ be a function mapping from $\realnumberpositive$ to
  $\realnumberpositive$ and let $n_0$ be a positive natural number
  such that $F$ restricted to the domain
  $\{t \in \realnumberpositive \condition t \geq n_0\}$ is
  nondecreasing and for all $t \geq n_0$ we have (a)~$F(t) \geq t + 2$
  and (b)~$(\forall c \in \realnumberatleastone)[cF(t) \leq F(ct)]$.
  Let $j$ be a function mapping from $\naturalnumber$ to
  $\naturalnumberpositive$
that is computable in time polynomial in the value
  of its input and whose output is at most polynomial in the value of
  its input. Suppose $T \subseteq \naturalnumberpositive$
  has an $F$-nongappy, P-printable subset $S$.  Let
  $\lambda = 4 + |s|$ where $s$ is the smallest element of $S$ with
  $|s| \geq n_0$.
  If for some $\beta$, ${F^{[j(n)]}(j(n) \lambda)} = \bigo(n^\beta)$,
  then $\upleq{j(n)} \subseteq \rcnb{T}$.
\end{theorem}

How does this theorem compare with our other metatheorem,
Theorem~\ref{t:meta1}?
Since in both metatheorems $F$ is nondecreasing after a %
prefix,
speaking informally and broadly,
the functions $F$ where (after a %
prefix)
$(\forall c \in \realnumberatleastone)[cF(t) \leq F(ct)]$ holds
grow faster
than the functions $F$ where (after a
prefix)
$(\forall c \in \naturalnumberpositive)[cF(t) \geq F(ct)]$ holds.
(The examples we give of applying the two theorems reflect this.)
So, this second metatheorem is accommodating larger gaps
in the sets of integers
that define our restricted counting class, but is also
assuming a slightly
stronger condition for the containment of an ambiguity-limited class
to follow. More specifically, since we have the extra factor of $j(n)$
inside of the iterated application of $F$, we may need even more than
$j(|x|)$ elements to be reachable in polynomial time (exactly how many
more will depend on the particular function~$F$).

\begin{proof}[Proof of Theorem~\ref{t:meta2}]
The proof follows almost identically to the proof of Theorem~\ref{t:meta1}.
Let $F$, $j$, $n_0$, $T$, and $S$ be as per the theorem statement.
Suppose $(\exists \beta' \in \naturalnumberpositive)[F^{[j(n)]}(j(n)\lambda) = \bigo(n^{\beta'})]$, and let $\beta \in \naturalnumberpositive$ be some specific, fixed
$\beta'$ value instantiating that.
We define the sequences of constants $a_n$, $b_n$, and $c_n$ exactly as in the proof of Theorem~\ref{t:meta1}.
We now show that $\max_{2 \leq \ell \leq j(n)} |a_\ell|$
and $\max_{1 \leq \ell \leq j(n)} |c_\ell|$ are at most polynomial in $n$.
For the same reasons as in the proof of Theorem~\ref{t:meta1}, if $\max_{1 \leq \ell \leq j(n)} |c_\ell|$ is at most polynomial in $n$ then $\max_{2 \leq \ell \leq j(n)} |a_\ell|$ is at most polynomial in $n$.

We prove that $\max_{1 \leq \ell \leq j(n)} |c_\ell|$ is at most polynomial in $n$ by proving the following claim via induction: for all $i \in \{2, 3, \ldots\}$, $\max_{1 \leq \ell \leq i} |c_\ell| \leq F^{[i-1]}((i-1)\lambda)$.
Notice that Eq.~\ref{eq:c_i_bound} from the proof of Theorem~\ref{t:meta1}, which says that $|c_i| \leq F(2i + \max_{1 \leq k \leq i-1} |c_k|)$ still holds, since deriving it did not use any of the assumptions that are different in that theorem.
Plugging in $i = 2$ gives us $|c_2| \leq F(4 + |c_1|) = F(\lambda)$.
Also, by condition (a) and the fact that $|c_1| \geq n_0$, $|c_1| < F(|c_1|)$ which, since $F$ restricted on $\{t \in \realnumberpositive \condition t \geq n_0\}$ is nondecreasing, is at most $F(\lambda)$.
Thus $\max_{1 \leq k \leq 2} |c_k| \leq F(\lambda)$, which is the $i=2$ case of our claim.
Suppose now that the claim holds for some $i \geq 2$.
Since $\lambda \geq 4$, we have $2(i+1) \leq 2(i+1) + \lambda - 4$.
Condition (a) implies $2(i+1) \leq F^{[i-1]}(\lambda) = \frac{i-1}{i-1}F^{[i-1]}(\lambda)$.
Since it follows from condition (b) that for all $\ell \in \naturalnumberpositive$, $c \geq 1$, and $t \geq n_0$, $cF^{[\ell]}(t) \leq F^{[\ell]}(ct)$, we have $2(i+1) \leq \frac{1}{i-1}F^{[i-1]}((i-1)\lambda)$.
Plugging $i+1$ into Eq.~\ref{eq:c_i_bound} we have
\begin{align*}
    |c_{i+1}| &\leq F(2(i+1) + \max_{1 \leq k \leq i} |c_k|) \\
    &\leq F(2(i+1) + F^{[i-1]}((i-1)\lambda)) \tag{by the inductive hypothesis} \\
    &\leq F(\frac{1}{i-1}F^{[i-1]}((i-1)\lambda) + F^{[i-1]}((i-1)\lambda)) \\
    &= F(\frac{i}{i-1}F^{[i-1]}((i-1)\lambda)) \\
    &\leq F^{[i]}(i\lambda). \tag{by condition (b)}
\end{align*}
Since $\max_{1 \leq \ell \leq i} |c_\ell| \leq F^{[i-1]}((i-1)\lambda) \leq F^{[i]}(i\lambda)$, we have $\max_{1 \leq \ell \leq i+1} |c_\ell| \leq F^{[i]}(i\lambda)$,
which, by induction, proves the claim.
Plugging $i = j(n)$ into the claim we just proved we get that $\max_{1 \leq \ell \leq j(n)} |c_\ell| \leq F^{[j(n)-1]}((j(n)-1)\lambda) \leq F^{[j(n)]}(j(n)\lambda) = \bigo(n^\beta)$.

Consider any language $L \in \upleq{j(n)}$ witnessed by machine $\hat{N}$.
Given $\hat{N}$ and the sequences of constants we defined, we construct a machine $N$ identically to the proof of Theorem~\ref{t:meta1}.
By the arguments in the proof of that theorem, $N$ accepts $L$ in an $\rcnb{T}$-like fashion, apart from the fact that we have not yet shown $N$ to be an NPTM\@.  As to that final issue,
since we showed that $\max_{2 \leq i \leq j(n)} |a_i|$ and $\max_{1 \leq i \leq j(n)} |c_i|$ are at most polynomial in $n$,
the arguments in the proof of Theorem~\ref{t:meta1} for why $N$ is an NPTM still hold.
Thus $L \in \rcnb{T}$, which completes our proof.
\end{proof}

We now discuss some other notions of nongappiness and obtain
complexity-class containments regarding them using Theorem~\ref{t:meta2}.
Theorem~\ref{t:n^k} and its corollary, Corollary~\ref{c:lpw-containment}, were flawed in previous versions
of this paper; we thank an anonymous
ACM~TOCT referee for spotting the problem.

\begin{theorem}\label{t:n^k}
    For any number $k \in \realnumberpositive$ that can be expressed as $k = 2^{c/2}$ for some $c \in \naturalnumberpositive$, if $T \subseteq \naturalnumberpositive$ has an $n^k$-nongappy, P-printable subset, then
    $\upleq{\bigo(1) + \frac{\log\log n}{2\log k}} \subseteq \rcnb{T}$.
\end{theorem}
\begin{proof}
 Let $c$ be an arbitrary positive natural number, and let $k = 2^{c/2}$ (notice that $k \geq \sqrt{2}$).
Suppose $T \subseteq \naturalnumberpositive$ is a set having an $n^k$-nongappy, P-printable subset.  We argue that $\upleq{\bigo{(1)} + \frac{\log \log n}{2 \log k}} \subseteq
 \rcnb{T}$.

Set $F: \realnumberpositive \to \realnumberpositive$ to be $F(t) = t^k$.
$F$ satisfies the conditions on the $F$ in Theorem~\ref{t:meta2}
because $F$ is nondecreasing on $\realnumberpositive$ (since $k > 0$), and for all $t \geq 4$ we have
(a) $t^k - t = t(t^{k-1} - 1) \geq 4(4^{\sqrt{2}-1} - 1) > 2$,
which means $F(t) \geq t + 2$, and
(b) $(\forall c \in \realnumberatleastone)[cF(t) = ct^k \leq (ct)^k = F(ct)]$.
Let $\lambda = 4+|s|$ where $s$ is the smallest element of the $n^k$-nongappy, P-printable subset of $T$ where the conditions on $F$ hold for all $t \geq |s|$, i.e.,
$s$ is the smallest element of the $n^k$-nongappy, P-printable subset such that $|s| \geq 4$.
For every function $g: \naturalnumber \to \realnumberatleastone$
  satisfying $g(n)=\bigo{(1)} + \frac{\log \log n}{2 \log k}$
  there exists a $d \in \naturalnumberpositive$ such that
  $g(n)\leq \floor{d + \frac{\log \log n}{2 \log k}}$, and hence
  $\upleq{g(n)} \subseteq \upleq{\floor{d + \frac{\log \log n}{2 \log k}}}$.%
  \footnote{Note that the expression $d + \frac{\log\log n}{2 \log k}$ is not problematic despite the fact that 0 is a valid input since we have globally redefined $\log(\cdot)$ to mean $\log(\max(1, \cdot))$. The same is true for similar expressions that appear in the proof of Theorem~\ref{t:tradeoff}.}
  The function $j(n) = \floor{d + \frac{\log \log n}{2 \log k}} = \floor{d + \frac{\log\log n}{c}}$
  satisfies the conditions
  on $j$
  of Theorem~\ref{t:meta2}, since $j(n)$ can be
  computed in time polynomial in the value $n$
  (since $\floor{\frac{\log\log n}{c}}$ can be computed by doing a linear search for the largest natural number $i$ such that $2^{2^{ci}} \leq n$)
  and $j(n)$ has value at
  most polynomial in the value $n$.  Applying Theorem~\ref{t:meta2},
to prove that $\upleq{j(n)} \subseteq \rcnb{T}$ it suffices to show
that for some $\beta \in \naturalnumberpositive$ and for sufficiently large $n$, $F^{[j(n)]}(j(n) \lambda) \leq n^{\beta}$.
Plugging in
  $F^{[j(n)]}(j(n) \lambda) = (j(n) \lambda)^{k^{j(n)}}$ and taking logarithms,
  we see that this is the
  same as showing that there exists $\beta$ such that for sufficiently large $n$,
  $j(n) \log k + \log \log (j(n) \lambda) \leq \log \beta + \log \log n$.
  For large $n$,
  $j(n) \log k + \log \log (j(n) \lambda) \leq 2j(n) \log k \leq 2d\log k + \log\log n$.

  Setting $\beta = 2^{2d\log k}$ gives us the
  desired inequality.
Thus for any function $g: \naturalnumber \to \realnumberatleastone$
  satisfying $g(n)=\bigo{(1)} + \frac{\log \log n}{2 \log k}$,
  we have shown that there exists
a function $j$
  such that
  $\upleq{g(n)} \subseteq \upleq{j(n)} \subseteq \rcnb{T}$ (and
  so we have
 $\upleq{g(n)} \subseteq \rcnb{T}$).
\end{proof}

Theorem~\ref{t:n^k} has an interesting consequence when applied to the
Mersenne primes.  In particular, as we now show, it can be used to
prove that
the Lenstra--Pomerance--Wagstaff Conjecture implies that the
$(\bigoh(1) + \log\log n)$-ambiguity sets in NP each belong to
$\rcnb{\primes}$.

A Mersenne prime is a prime of the form $2^k-1$. We will use the
Mersenne prime counting function $\mu(n)$
to denote the number of Mersenne primes with length less than or equal
to~$n$ (when represented in binary).  The Lenstra--Pomerance--Wagstaff
Conjecture~\cite{pom:j:primality-testing,wag:j:mersenne-numbers}
(see also~\cite{cal:url:mersenne-heuristic}) asserts that there are
infinitely many Mersenne primes, and that $\mu(n)$ grows
asymptotically as $e^\gamma \log n$ where $\gamma \approx 0.577$ is
the Euler--Mascheroni constant.
(Note: We say that $f(n)$ grows asymptotically as $g(n)$ when
  $\lim_{n\to\infty} f(n)/g(n) = 1$.)

Having infinitely many Mersenne primes immediately yields an infinite
P-printable subset of the primes.
In particular, on input $1^n$ we can print all Mersenne primes of
length less than or equal to $n$ in polynomial time by just checking
(using a deterministic polynomial-time primality
test~\cite{agr-kay-sax:j:primality}) each number of the form $2^k-1$
whose length is less than or equal to $n$, and if it is prime then
printing it.

If the Lenstra--Pomerance--Wagstaff Conjecture holds, what can we also
say about the gaps in the Mersenne primes?  We address that with the
following result.

\begin{theorem}\label{t:lpw}
  If the Lenstra--Pomerance--Wagstaff Conjecture holds, then for each
  $\epsilon > 0$ the primes (indeed, even the Mersenne primes) have an
  $n^{1 + \epsilon}$-nongappy, P-printable subset.
\end{theorem}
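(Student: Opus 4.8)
The plan is to use the Lenstra--Pomerance--Wagstaff Conjecture's asymptotic density estimate $\mu(n) \sim e^\gamma \log n$ to control the gaps in the Mersenne primes, measured in terms of \emph{lengths}. Fix $\epsilon > 0$. Since primality is decidable in deterministic polynomial time \cite{agr-kay-sax:j:primality}, the set of all Mersenne primes is P-printable (as already noted in the text preceding the theorem: on input $1^n$, enumerate all $2^k - 1$ of length at most $n$ and test each for primality). So it suffices to show that this P-printable set (or a P-printable subset of it) is $n^{1+\epsilon}$-nongappy, i.e., that for each Mersenne prime $m$ there is another Mersenne prime $m' > m$ with $|m'| \leq (1+\epsilon)|m|$ --- where here I should be slightly careful, since ``$n^{1+\epsilon}$-nongappy'' literally means $|m'| \leq F(|m|)$ with $F(t) = t^{1+\epsilon}$, so the condition to establish is $|m'| \leq |m|^{1+\epsilon}$.

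First I would translate the density statement into a counting statement about lengths. Let $M_n$ denote the number of Mersenne primes of length at most $n$; the conjecture gives $M_n \sim e^\gamma \log n$, so in particular $M_n \to \infty$ and for any $c > 1$ we have $M_{cn} / M_n \to 1$ (since both are $\sim e^\gamma(\log n + \log c) / (e^\gamma \log n) \to 1$). The key observation is that $M_n$ is \emph{strictly increasing infinitely often but grows extremely slowly} --- roughly logarithmically --- so the length-gaps between consecutive Mersenne primes, while they do grow, grow much slower than linearly. Concretely: if $m$ is a Mersenne prime of length $\ell$, I want to find the next Mersenne prime and show its length is at most $\ell^{1+\epsilon}$. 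Suppose not: then there are no Mersenne primes of length strictly between $\ell$ and $\ell^{1+\epsilon}$, which means $M_{\lfloor \ell^{1+\epsilon}\rfloor} = M_\ell$ (no new ones appear in that range). But by the asymptotic, $M_{\lfloor\ell^{1+\epsilon}\rfloor} \sim e^\gamma \log(\ell^{1+\epsilon}) = (1+\epsilon) e^\gamma \log \ell$ while $M_\ell \sim e^\gamma \log\ell$; for $\ell$ large these differ by a factor approaching $1+\epsilon > 1$, so they cannot be equal. Hence for all sufficiently large $\ell$ the next Mersenne prime does occur within length $\ell^{1+\epsilon}$.

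To finish, I would handle the finitely many small Mersenne primes: since ``$n^{1+\epsilon}$-nongappy'' requires the gap bound for \emph{every} element of the set, I would discard the finitely many Mersenne primes of length below the threshold $\ell_0$ from which the asymptotic argument kicks in, and keep only the (still infinite, still P-printable) tail. Strictly, I should double-check that $F(t) = t^{1+\epsilon}$ together with this tail meets Definition~\ref{d:nongappy-main}: the tail is nonempty and infinite, and for each $m$ in it there is a larger $m'$ in it with $|m'| \leq |m|^{1+\epsilon}$, so it is $F$-nongappy. One subtlety to be careful about: the asymptotic $\mu(n) \sim e^\gamma \log n$ only controls $\mu$ up to a $1+o(1)$ factor, so the ratio argument needs $\ell$ large enough that the $o(1)$ errors on both sides are dominated by the genuine factor-$(1+\epsilon)$ gap --- this is where the ``finitely many exceptions'' come in, and is the main (though routine) technical point. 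The hard part is really just making the density-to-gap translation rigorous against the $o(1)$ slack; there is no deep obstacle, since $\epsilon > 0$ gives a constant-factor cushion that any slowly-vanishing error eventually falls below.
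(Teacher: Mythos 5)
Your proposal is correct and follows essentially the same route as the paper's proof: both translate the density conjecture into a bound on $\mu$ with a $(1\pm\delta)$ tolerance, observe that a length-gap exceeding the $n^{1+\epsilon}$ bound would force $\mu$ to stay essentially constant over a range where $\log(\cdot)$ grows by a factor of $1+\epsilon$, derive a contradiction for all but finitely many elements, and discard the finite prefix to obtain the $n^{1+\epsilon}$-nongappy, P-printable tail. The paper simply makes the quantitative bookkeeping explicit (fixing $\delta < \frac{\epsilon}{2(\epsilon+2)}$ and tracking the ``$+1$'' from the single Mersenne prime per length) where you leave the $o(1)$ absorption as a routine remark.
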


\begin{proof}
  Assuming the Lenstra--Pomerance--Wagstaff Conjecture, there must be
  an infinite number of Mersenne primes. For each
  $i \in \naturalnumberpositive$, let $M_i$ denote the $i$th Mersenne
  prime.

  The
  density assertion of the
  Lenstra--Pomerance--Wagstaff Conjecture %
  implies that
  $(\forall \delta >0) (\exists N(\delta) \in \naturalnumberpositive)
  (\forall n>N(\delta)) [(1-\delta)(e^\gamma \log n) \leq \mu(n) \leq
  (1+\delta)(e^\gamma \log n)]$.  Suppose, by way of seeking a
  contradiction, that for some $\epsilon > 0$ there are infinitely
  many $n$ such that for two successive Mersenne primes $M_n$ and
  $M_{n+1}$, %
  $|M_{n+1}| > |M_n|^{1+\epsilon}$.  Fix a $\delta$ satisfying
  $\delta < \frac{\epsilon}{2(\epsilon + 2)}$, and let $M_n$ and
  $M_{n+1}$ be two consecutive Mersenne primes such that
  $|M_n| > \max(N(\delta), 2^{\frac{2}{e^\gamma
      \epsilon}})$ %
  and $|M_{n+1}| > |M_n|^{1+\epsilon}$. We have
  $\mu(|M_n|) \leq (1+\delta) (e^\gamma \log |M_n|)$, and since there
  are no Mersenne primes between $M_n$ and $M_{n+1}$, %
  $\mu(|M_{n+1}|) \leq 1 + (1+\delta) (e^\gamma \log
  |M_n|)$.\footnote{This follows since there can be at most one
    Mersenne prime of each length, and so in particular $M_{n+1}$ is
    the sole Mersenne prime of length $|M_{n+1}|$.} We also have that
  \begin{align*}
        \mu(|M_{n+1}|) &\geq (1-\delta) (e^\gamma \log |M_{n+1}|) \\
        &\geq (1-\delta) (e^\gamma \log (|M_n|^{1+\epsilon})) \\
        & = (1-\delta) (1+\epsilon) (e^\gamma \log |M_n|).
  \end{align*}
  Now note that
  \begin{align*}
\MoveEqLeft[9] 1 + (1+\delta) (e^\gamma \log |M_n|) - (1-\delta) (1+\epsilon)
  (e^\gamma \log |M_n|) \\[4pt]
    & =     1 + e^\gamma (\log |M_n|) ( (1+\delta) - (1-\delta)(1+\epsilon)) \\
       & < 1 + e^\gamma (\log |M_n|) ((1+\frac{\epsilon}{2(\epsilon + 2)}) -
            (1-\frac{\epsilon}{2(\epsilon + 2)})(1+\epsilon)) \\
      &  = 1 - e^\gamma (\log |M_n|)(\frac{\epsilon}{2}).
  \end{align*}
  For
  $|M_n| > 2^{\frac{2}{e^\gamma \epsilon}}$ we have
  $1 - e^\gamma (\log |M_n|)(\frac{\epsilon}{2}) < 0$, and thus for
  $|M_n| > 2^{\frac{2}{e^\gamma \epsilon}}$ we also have
  $1 + (1+\delta) (e^\gamma \log |M_n|) < (1-\delta) (1+\epsilon)
  (e^\gamma \log |M_n|).$
  This last inequality yields a contradiction as we have also shown
  $(1-\delta) (1+\epsilon)
  (e^\gamma \log |M_n|) \leq \mu(|M_{n+1}|) \leq 1 + (1+\delta) (e^\gamma \log
  |M_n|)$.

  So for any $\epsilon > 0$ there are only finitely many $n$ such that
  the consecutive Mersenne primes $M_n$ and $M_{n+1}$ have
  $|M_{n+1}| > |M_n|^{1+\epsilon}$. Let $n_0$ be the least integer
  such that for all $n>n_0$, $|M_{n+1}| \leq |M_n|^{1+\epsilon}$. The
  set of Mersenne primes
  $\{M_i \condition i>n_0\}$ is an $n^{1 + \epsilon}$-nongappy,
  P-printable subset of the primes.
\end{proof}

\begin{corollary}\label{c:lpw-containment}
  If the Lenstra--Pomerance--Wagstaff Conjecture holds, then
  $\upleq{\bigo(1) + \log \log n} \subseteq \rcnb{\primes}$ (indeed,
  $\upleq{\bigo(1) + \log \log n} \subseteq \rcnb{\mersenneprimes}$).
\end{corollary}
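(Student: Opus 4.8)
The plan is to chain Theorem~\ref{t:lpw} into Theorem~\ref{t:n^k}. First I would observe that it suffices to prove the stronger (Mersenne-primes) containment: since $\mersenneprimes \subseteq \primes$ and $S \subseteq S'$ implies $\rcnb{S} \subseteq \rcnb{S'}$, the containment into $\rcnb{\mersenneprimes}$ immediately gives the one into $\rcnb{\primes}$.

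Next I would unwind the class $\upleq{\bigo(\log\log n)} = \bigcup_{g} \upleq{g(n)}$ (the union taken over functions $g\colon \naturalnumber \to \realnumberatleastone$ with $g = \bigo(\log\log n)$): it is enough to show $\upleq{g(n)} \subseteq \rcnb{\mersenneprimes}$ for each such fixed $g$. So fix $g$, and choose constants $c, d \in \naturalnumberpositive$ with $g(n) \leq d + c\log\log n$ for all $n$ (the $\bigo(\log\log n)$ bound gives $g(n) \leq c\log\log n$ for all large $n$, and the finitely many small-$n$ values of the fixed function $g$ are absorbed into $d$). Now set $k = 2^{1/(2c)}$, so that $k$ is a real number strictly greater than $1$ and $\frac{1}{2\log k} = c$; writing $\epsilon = k - 1 > 0$, Theorem~\ref{t:lpw} (applied with this $\epsilon$) yields that the Mersenne primes have an $n^{1+\epsilon} = n^k$-nongappy, P-printable subset.

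Finally I would apply Theorem~\ref{t:n^k} with this value of $k$ and with $T = \mersenneprimes$, obtaining $\upleq{\bigo(1) + \frac{\log\log n}{2\log k}} \subseteq \rcnb{\mersenneprimes}$. Since $\frac{\log\log n}{2\log k} = c\log\log n$, the function $n \mapsto d + c\log\log n$ is an element of the function class $\bigo(1) + \frac{\log\log n}{2\log k}$, and because $g(n) \leq d + c\log\log n$ and $\upleq{\cdot}$ is monotone in its ambiguity bound, $\upleq{g(n)} \subseteq \upleq{d + c\log\log n} \subseteq \upleq{\bigo(1) + \frac{\log\log n}{2\log k}} \subseteq \rcnb{\mersenneprimes}$. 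Taking the union over all $g \in \bigo(\log\log n)$ finishes the proof. The only subtle point is the step just highlighted: the coefficient of $\log\log n$ delivered by Theorem~\ref{t:n^k} is the rigid quantity $\frac{1}{2\log k}$, whereas an arbitrary element of $\bigo(\log\log n)$ may carry an arbitrarily large constant $c$, so one must drive $k$ down toward $1$ (taking $k = 2^{1/(2c)}$) — which costs nothing, since Theorem~\ref{t:lpw} supplies $n^{1+\epsilon}$-nongappiness for every $\epsilon > 0$.
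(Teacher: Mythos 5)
Your proposal is correct and follows essentially the same route as the paper's own proof: reduce to a single $g \in \bigo(\log\log n)$, choose $k = 2^{1/(2c)}$ so that $\frac{1}{2\log k} = c$, invoke Theorem~\ref{t:lpw} for an $n^k$-nongappy P-printable subset of the Mersenne primes, and then apply Theorem~\ref{t:n^k}. The only cosmetic difference is that you absorb the finitely many small-$n$ values of $g$ into an additive constant $d$, whereas the paper folds them into $c$ via a $\log\log(\max(4,n))$ bound; both work equally well.
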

\begin{proof}
    Assume that the Lenstra--Pomerance--Wagstaff Conjecture holds.
    Since $2^{1/2} > 1$, by Theorem~\ref{t:lpw} the Mersenne primes have an $n^{2^{1/2}}$-nongappy, P-printable subset.
    The conditions of Theorem~\ref{t:n^k} are satisfied with $k = 2^{1/2}$, and so we have $\upleq{\bigo(1) + \frac{\log\log n}{2\log(2^{1/2})}} \subseteq \rcnb{\mersenneprimes}$, and hence $\upleq{\bigo(1) + \log\log n} \subseteq \rcnb{\mersenneprimes} \subseteq \rcnb{\primes}$.
\end{proof}

We will soon turn to discussing more notions of nongappiness and what
containment theorems hold regarding them.  However, to support one of
those notions, we first define a function that will arise naturally in
Theorem~\ref{t:tradeoff}.

\begin{definition}\label{d:logcircledast}
  For any $\alpha \in \mathbb R$, $\alpha > 0$,
  $\log^\circledast(\alpha)$ is the largest natural number $k$ such
  that $\log^{[k]}(\alpha) \geq k$. We define $\log^\circledast(0)$ to
  be $0$.
\end{definition}

For $\alpha > 1$, taking $k = 0$ satisfies
$\log^{[k]}(\alpha) \geq k$. Also, for all $\ell > \log^*(\alpha)$,
$\log^{[\ell]}(\alpha) < \log^{[\log^*(\alpha)]}(\alpha) \leq 1 \leq \ell$,
and so no $\ell > \log^*(\alpha)$ can be used as the $k$ in the definition above.
So there is at least
one, but only finitely many $k$ such that $\log^{[k]}(\alpha) \geq k$,
which means that
$\log^\circledast(\alpha)$ is well-defined.  Notice that using the
definition of $\log^\circledast(\alpha)$ and the above, we get
$\log^\circledast(\alpha) \leq \log^*(\alpha)$ when $\alpha > 1$.  For
$\alpha \leq 1$, 0 is the only natural number for which the condition
from the definition holds, and so $\log^\circledast(\alpha) = 0$ if
$\alpha \leq 1$.  Thus for $\alpha \leq 1$,
$\log^\circledast(\alpha) = \log^*(\alpha)$.  (Since
Definition~\ref{d:logcircledast}'s first sentence allows values on the
open interval $(0,1)$, one might worry that the fact that we have
globally redefined $\log(\cdot)$ to implicitly be
$\log(\max(1,\cdot))$ might be changing what
$\log^\circledast(\alpha)$ evaluates to.  However, it is easy to see
that, with or without the max, what this evaluates to in the range
$(0,1)$ is 0, and so our implicit max is not changing the value of
$\log^\circledast$.)

We are using a ``variant star'' notation for $\log^\circledast$
because it in fact is related both definitionally and in terms of
value to $\log^*$.  As to its definition, $\log^\circledast$ can
alternatively be defined as the following, which in form looks far
closer to the definition of $\log^*$ than does the version in
Definition~\ref{d:logcircledast} above: ``For any
$\alpha \in \mathbb R$, $\alpha > 0$, $\log^\circledast(\alpha)$ is
${{-}1}$ plus the smallest natural number $k$ such that
$\log^{[k]}(\alpha) < k$.
We define $\log^\circledast(0)$ to be $0$.''  As to
the relationship of its values to those of $\log^*$, we have the
following theorem.

\begin{theorem} \label{t:ilog} For all $\alpha \geq 0$,
  $\log^*(\alpha) - \log^*(\log^*(\alpha) + 1) - 1 \leq
  \log^\circledast(\alpha) \leq \log^*(\alpha)$.
\end{theorem}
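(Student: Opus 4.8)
The plan is to treat the two inequalities separately, since the upper bound is essentially already recorded and only the lower bound needs work. For $\log^\circledast(\alpha)\le\log^*(\alpha)$: when $\alpha\le1$ both sides equal $0$, and when $\alpha>1$ this is exactly the observation made just before the theorem (for any $\ell>\log^*(\alpha)$ one has $\log^{[\ell]}(\alpha)\le1<\ell$ since then $\ell\ge2$, so no index exceeding $\log^*(\alpha)$ can witness $\log^\circledast$). So I would just cite that and move on.

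For the lower bound, write $s=\log^*(\alpha)$, $d=\log^*(s+1)$, and $k^\ast=s-d-1$. Because $\log^\circledast(\alpha)\ge0$ always, there is nothing to prove when $k^\ast\le0$; this subsumes the degenerate case $\alpha\le1$ (there $s=0$, $d=0$, $k^\ast=-1$). So assume $k^\ast\ge0$, which forces $s\ge1$ and $0\le k^\ast=s-1-d\le s-1$. The core step is to bound $\log^{[k^\ast]}(\alpha)$ below by a tower of twos. Introduce $\tau$ by $\tau(0)=1$, $\tau(m+1)=2^{\tau(m)}$. By minimality of $s=\log^*(\alpha)$ we have $\log^{[i]}(\alpha)>1$ for every $i<s$, so on that prefix the implicit $\max$ in the convention $\log(\cdot)=\log(\max(1,\cdot))$ is inert and $x\mapsto 2^{\log x}$ is the identity; a downward induction on $i$, starting from $\log^{[s-1]}(\alpha)>1=\tau(0)$ and using $\log^{[i-1]}(\alpha)=2^{\log^{[i]}(\alpha)}$, then gives $\log^{[i]}(\alpha)>\tau(s-1-i)$ for all $0\le i\le s-1$. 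Taking $i=k^\ast$ yields $\log^{[k^\ast]}(\alpha)>\tau(d)$.

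The second ingredient is $\tau(d)\ge s+1$, i.e.\ $y\le\tau(\log^*(y))$ for the integer $y=s+1\ge2$: from $\log^{[d]}(s+1)\le1$ (definition of $d$) together with $\log^{[i]}(s+1)>1$ for $i<d$ (minimality of $d$), applying $x\mapsto2^x$ exactly $d$ times climbs from $1$ back up to $s+1\le\tau(d)$. Combining the two estimates, $\log^{[k^\ast]}(\alpha)>\tau(d)\ge s+1>s-1-d=k^\ast$, so $k^\ast$ is an index with $\log^{[k^\ast]}(\alpha)\ge k^\ast$; since $\log^\circledast(\alpha)$ is by definition the largest such index, $\log^\circledast(\alpha)\ge k^\ast=\log^*(\alpha)-\log^*(\log^*(\alpha)+1)-1$, which is the claimed bound.

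The only genuinely delicate point — the one I would spell out rather than wave at — is the bookkeeping forced by the global redefinition $\log(\cdot)=\log(\max(1,\cdot))$: both inductions tacitly use $2^{\log x}=x$, which is valid only while the iterates involved stay above $1$. But the required ``$>1$ on a prefix'' statements follow immediately from the minimality clauses defining $\log^*$ (on indices $i<s$ for the $\log^{[i]}(\alpha)$ chain, and on indices $i<d$ for the $\log^{[i]}(s+1)$ chain), so once those are stated up front the rest is routine arithmetic. I would also note at the outset that the cases $\alpha\le1$ and $k^\ast\le0$ are dispatched by $\log^\circledast(\alpha)\ge0\ge k^\ast$.
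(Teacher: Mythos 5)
Your proposal is correct, and the lower-bound argument takes a genuinely different route from the paper's. The paper starts from the \emph{maximality} clause in Definition~\ref{d:logcircledast}: it first deduces
$\log^{[\log^\circledast(\alpha)]}(\alpha)\le 2^{\log^\circledast(\alpha)+1}$
(otherwise one more application of $\log$ would yield an index exceeding $\log^\circledast(\alpha)$ that still satisfies the defining inequality, a contradiction), and then feeds this into the telescoping identity
$\log^*(\alpha)=k+\log^*(\log^{[k]}(\alpha))$
for $k=\log^\circledast(\alpha)$, combined with the already-proved upper bound $\log^\circledast(\alpha)\le\log^*(\alpha)$; a short chain of inequalities and a rearrangement finishes. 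You instead name the target value $k^\ast=\log^*(\alpha)-\log^*(\log^*(\alpha)+1)-1$ up front and prove directly that $k^\ast$ itself is a legitimate index in the set being maximized, by the two tower-of-twos estimates $\log^{[k^\ast]}(\alpha)>\tau(d)$ (downward induction along the iterate chain of $\alpha$) and $\tau(d)\ge\log^*(\alpha)+1>k^\ast$ (upward along the iterate chain of $\log^*(\alpha)+1$). Both are sound; the paper's argument is terser once the telescoping identity for $\log^*$ is granted, while yours is more explicit about \emph{which} index witnesses the bound and makes the $\log(\cdot)=\log(\max(1,\cdot))$ bookkeeping — which the paper can mostly hide inside the identity — more visible, for better or worse. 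Your careful attention to which iterates stay above $1$ is exactly right; the only cosmetic slip is the overlapping case split (``nothing to prove when $k^\ast\le 0$ \dots so assume $k^\ast\ge 0$''), which should read $k^\ast\ge 1$ for disjointness, though the argument does go through at $k^\ast=0$ as well.
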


\begin{proof}
  We have for all $\alpha \geq 0$,
  $\log^\circledast(\alpha) \leq \log^*(\alpha)$, which follows from
  the discussion before this theorem.  Take any $\alpha \geq 0$.
  From the definition of $\log^\circledast$ it follows that
  $\log^{[\log^\circledast(\alpha)]}(\alpha) \leq
  2^{\log^\circledast(\alpha) + 1}$, for if not,
  then we must have
  $\log^{[\log^\circledast(\alpha)]}(\alpha) >
  2^{\log^\circledast(\alpha) + 1}$, which
  (taking the logarithm of both sides)
  implies that
  $\log^{[\log^\circledast(\alpha) + 1]}(\alpha) >
  \log^\circledast(\alpha) + 1$, contradicting the fact that
  $\log^\circledast(\alpha)$ is the greatest number for which such an
  inequality holds.
  Notice that for any $x$, since $\log^*(x)$ is the smallest number
  of logarithms one needs to apply to $x$ to obtain
  a result less than or equal to 1, we have that for any $k \leq \log^*(x)$,
  $\log^*(x) = k + \log^*(\log^{[k]}(x))$.
  Plugging in $x = \alpha$ and $k = \log^\circledast(\alpha)$ we get
  $\log^*(\alpha) = \log^\circledast(\alpha) +
  \log^*(\log^{[\log^\circledast(\alpha)]}(\alpha)) \leq
  \log^\circledast(\alpha) + \log^*(2^{\log^\circledast(\alpha) + 1})
  \leq \log^\circledast(\alpha) + \log^*(2^{\log^*(\alpha) + 1}) =
  \log^\circledast(\alpha) + \log^*(\log^*(\alpha) + 1) + 1$,
  where the second inequality holds from the upper bound.
  Rearranging gives us our lower bound.
\end{proof}

Theorem~\ref{t:ilog}'s upper bound leaves open the
possibility that
$\log^\circledast(\alpha)$ and  $\log^*(\alpha)$
might  be the same, or if not then at least that the former might be less than
the latter by no more than some global constant.  However, we
now will prove that this is not the case.
That is, we will show as Theorem~\ref{t: log i.o.} that
there is an infinite collection $\mathcal{T}$
of natural numbers such that for no constant $d'$ does it hold,
on every element of the collection, that $\log^\circledast$ is at most $d'$
less than $\log^*$.  In fact, we will show a slightly stronger result
than that.

First, we introduce some useful mathematical notions.

\begin{definition}[see~\cite{goo:j:transfinite,ruc:b:infinity-mind,ney:unpub:tetration}]
    For each $n \in \naturalnumber$, the $n$th tetration of 2 is defined inductively by
    \begin{equation*}
        {^n 2} = \begin{cases}
            1 & n = 0 \\
            2^{(^{(n-1)} 2)} & n > 0.
        \end{cases}
    \end{equation*}
\end{definition}
Here we are using the so-called ``Rudy Rucker notation'' for tetration
introduced by Goodstein~\cite{goo:j:transfinite} and popularized by Rucker~\cite{ruc:b:infinity-mind}.

It is easy to see that the $n$th tetration of 2, $n\in\naturalnumber$, is exactly
$2^{2^{\iddots^2}}$ where there are $n$ 2s in the tower (and, as a convention, we
view a height
zero tower of 2s as evaluating to the value 1).
Since tetration is injective, it has an inverse defined on towers of 2s.

\begin{definition}[see~\cite{rub-rom:unpub:ackerman-slog}] \label{d:slog}
    Let $\mathcal{T}$ be the set $\{{^n 2} \condition n \in \naturalnumber\}$. The (base 2) superlogarithm, $\slog : \mathcal T \to \naturalnumber$ is the inverse operation to tetration. That is, for any $N = {^n 2}$, $\slog N = n$.
\end{definition}

It is easy to see that $\slog$ is increasing. While Definition~\ref{d:slog}
only defines $\slog$ for towers of 2s, we can extend it to a function from $\realnumberatleastone$ to the nonnegative real numbers as follows. First, we extend tetration of 2 to a function $^t 2$ from the nonnegative real numbers to $\realnumberatleastone$ via linear interpolation.%
\footnote{For any $f : \naturalnumber \to \realnumberatleastone$, the linear interpolation of $f$ is the function $\tilde f : \realnumber^{\geq 0} \to \realnumberatleastone$ given by $\tilde f(x) = (1- (x - \floor{x}))f(\floor{x}) + (x - \floor{x}) f(\floor{x} + 1)$.}
Note that this extension is surjective and increasing, so it has an inverse $\widetilde{\slog} : \realnumberatleastone \to \realnumber^{\geq 0}$. This inverse agrees with $\slog$ on towers of 2s, so we may safely write $\slog$ in place of $\widetilde{\slog}$.

With these notions in hand, we prove the following ``infinitely often'' superconstant separation result between $\log^\circledast$ and
$\log^*$.
\begin{theorem} \label{t: log i.o.}
  For $n \in \naturalfromtwo$, $\log^*({^n 2}) - \log^\circledast({^n 2}) \geq \slog(\frac{2}{3} n)$.
\end{theorem}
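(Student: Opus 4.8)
The plan is to compute $\log^*({^n 2})$ and $\log^\circledast({^n 2})$ explicitly, reduce their difference to a single quantity, and then bound that quantity from below using $\slog$.

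First I would note that $\log^*({^n 2}) = n$: applying $\log$ to ${^n 2}$ strips one $2$ off the top of the tower, so $\log^{[k]}({^n 2}) = {^{n-k} 2}$ for $0 \le k \le n$, giving $\log^{[n-1]}({^n 2}) = 2$ and $\log^{[n]}({^n 2}) = 1$, whence $\log^*({^n 2}) = n$. For $\log^\circledast$, the same identity $\log^{[k]}({^n 2}) = {^{n-k} 2}$ (together with the paper's convention $\log(\cdot) = \log(\max(1,\cdot))$, which forces $\log^{[k]}({^n 2}) = 0 < k$ as soon as $k > n$) shows that $\{k \in \naturalnumber : \log^{[k]}({^n 2}) \ge k\}$ is exactly $\{k : {^{n-k} 2} \ge k\}$. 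Reparametrizing by $j = n-k$, this becomes the set of $k = n - j$ with ${^j 2} + j \ge n$, so its largest element is $n - j^*$, where $j^*$ is the \emph{smallest} natural number with ${^{j^*} 2} + j^* \ge n$. Hence $\log^*({^n 2}) - \log^\circledast({^n 2}) = j^*$, and the theorem reduces to showing $j^* \ge \slog(\tfrac{2}{3}n)$. Note $j^* \ge 1$, since $n \ge 2$ while ${^0 2} + 0 = 1 < 2$.

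The one elementary fact I would establish along the way is that ${^j 2} \ge 2j$ for every integer $j \ge 1$; this follows from a short induction giving ${^j 2} \ge 2^j$ together with the standard inequality $2^j \ge 2j$. Applying it with $j = j^*$ yields $j^* \le \tfrac12\, {^{j^*} 2}$, so from ${^{j^*} 2} + j^* \ge n$ we get $n \le \tfrac32\, {^{j^*} 2}$, i.e.\ ${^{j^*} 2} \ge \tfrac{2}{3}n$. (The constant $\tfrac{2}{3}$ is precisely what this bound supports, since ${^1 2} = 2 = 2\cdot 1$ makes ${^j 2} \ge 2j$ tight at $j=1$, which corresponds to the case $n = 3$.) Finally, since $\tfrac{2}{3}n \ge 1$ lies in the domain of the extended $\slog$, that extension is increasing, and $\slog({^{j^*} 2}) = j^*$ by definition of $\slog$ on exact towers of $2$s, we conclude $j^* = \slog({^{j^*} 2}) \ge \slog(\tfrac{2}{3}n)$, which is the claimed bound.

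The step I expect to need the most care is the evaluation of $\log^\circledast({^n 2})$: one must check that no $k > n$ can qualify (this is where the redefined $\log$ on inputs $\le 1$ matters) and must get the correspondence between ``largest $k$'' and ``smallest $j$'' exactly right, so that the difference comes out to be $j^*$ rather than $j^* \pm 1$. Everything after that identification is routine arithmetic plus the monotonicity of $\slog$.
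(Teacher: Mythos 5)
Your proof is correct and reaches the same result by what is at bottom the same argument as the paper, but your parametrization is cleaner and a bit more elementary. The paper introduces the real-valued function $\mathfrak{s}$, defined as the inverse of $t + \slog t$ on $\realnumberatleastone$ (which requires a footnoted argument about continuity and surjectivity of that map), and then proves an exact-value lemma
$\log^*({^n 2}) - \log^\circledast({^n 2}) = (\mathfrak{s}(n) - \floor{\mathfrak{s}(n)}) + \slog(\mathfrak{s}(n))$
before applying the bound $\slog k \leq k/2$ to $\mathfrak s(n)$. You instead reparametrize the condition $\log^{[k]}({^n 2}) \geq k$ by $j = n-k$ and work directly with the integer $j^*$, the least $j$ with ${^j 2} + j \geq n$, obtaining $\log^* - \log^\circledast = j^*$ without ever introducing $\mathfrak s$; the one place you still need the real-valued, increasing extension of $\slog$ is the final comparison $\slog({^{j^*}2}) \geq \slog(\tfrac{2}{3}n)$, which the paper needs as well. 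The underlying inequality in both is ${^j 2} \geq 2j$ (equivalently $\slog m \leq m/2$), so the constant $\tfrac{2}{3}$ arises identically. What your route buys is that the argument stays inside $\naturalnumber$ until the last line, and the bookkeeping around ``largest $k$'' versus ``smallest $j$'' is explicit; what you give up is the paper's exact closed-form expression for the difference in terms of $\mathfrak s$, which may be of independent interest but is not needed for the stated inequality.
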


\begin{proof}
Notice that the function $t + \slog t$ from $\realnumberatleastone$ to $\realnumberatleastone$ is
increasing and surjective, and thus has an inverse.%
\footnote{That $t + \slog t$ from $\realnumberatleastone$ to $\realnumberatleastone$
is surjective follows from the basic facts from mathematical analysis
that increasing, surjective (real) functions are
continuous, and that the range of an increasing, continuous function is an interval.
The first fact implies that
our extension of $\slog$ is continuous, and hence that $t + \slog t$ is increasing and continuous. Since $t + \slog t$
on our domain attains a minimum value of 1 and is unbounded, its range is $[1,\infty)$, which is exactly what
it means for it to be surjective onto $\realnumberatleastone$.}
Let $\mathfrak s : \realnumberatleastone \to \realnumberatleastone$ be this inverse.  We will
use the following lemma.

\begin{lemma}\label{l:slog}
    For all $n \in \naturalnumberpositive$, $\log^*({^{n}2}) - \log^\circledast({^{n}2}) = (\mathfrak s(n) - \floor{\mathfrak s(n)}) + \slog(\mathfrak s(n))$.
\end{lemma}

\begin{proof}[Proof of Lemma~\ref{l:slog}]
    From the definition of $\log^\circledast$, we have
        $\log^\circledast({^n 2}) = \max\{k \in \naturalnumber \condition \log^{[k]}({^n 2}) \geq k\}$.
        Since $n \geq 1$, ${^n 2} \geq 2$, which means $\log^{[1]}({^n 2}) \geq 1$. This means that the
        max in the previous equation is at least 1, and so we can let the max run over $\naturalnumberpositive$
        without changing the value. Also, notice that $\log^{[n]}({^n 2}) = 1$, which, since $n \geq 1$, means
        that for all $\ell > n$, $\log^{[\ell]}({^n 2}) < \ell$. So any $k$ in the set we are maxing over
        must be at most $n$, and thus $\log^{[k]}({^n 2}) = {^{n-k} 2}$. Hence $\log^\circledast({^n 2})
        = \max\{k \in \naturalnumberpositive \condition {^{n-k} 2} \geq k\}
        = \max\{k \in \naturalnumberpositive \condition n - k \geq \slog k\}
        = \max\{k \in \naturalnumberpositive \condition k + \slog k \leq n\}$.

    Since $\mathfrak s$ is increasing with inverse $k + \slog k$, we get
    $\{k \in \naturalnumberpositive \condition k + \slog k \leq n \} = \{k \in \naturalnumberpositive \condition k \leq \mathfrak s(n)\}$,
    and thus $\log^\circledast(^n 2) = \floor{\mathfrak s(n)}$.

    On the other hand, we have $\log^*(^n 2) = n = \mathfrak s(n) + \slog (\mathfrak s(n))$, and thus $\log^*(^n 2) - \log^\circledast(^n 2) = (\mathfrak s(n) - \floor{\mathfrak s(n)}) + \slog (\mathfrak s(n))$. This concludes the proof
    of this lemma.
\end{proof}

Let us
get a handle on the function $\mathfrak s$.
Notice that for real numbers $t \geq 0$ we have $^t2 \geq 2t$, since the inequality holds when $t$ is a natural number, and taking linear interpolations of both sides preserves the inequality.
Changing variables, we get that for all $t \geq 0$, $^{t/2} 2 \geq t$.
Applying $\slog$ when defined we get that for all $t \geq 1$, $t/2 \geq \slog t$.
From the definition of $\mathfrak s$ as the inverse of $t + \slog t$, for each $n \in \naturalnumberpositive$ we have
$n = \mathfrak s(n) + \slog (\mathfrak s(n))$, which, by the inequality
just mentioned
is less than or equal to $\frac{3}{2} \mathfrak s(n)$,
so $\mathfrak s(n) \geq \frac{2}{3} n$.
Combining this with Lemma~\ref{l:slog} and using the fact that $\slog$ is increasing on its (now extended) domain $\realnumberatleastone$,
we get that for all $n \in \naturalfromtwo$ we have
$\log^*({^{n}2}) - \log^\circledast({^{n}2}) = (\mathfrak s(n) - \floor{\mathfrak s(n)}) + \slog(\mathfrak s(n))
\geq \slog(\mathfrak s(n))
\geq \slog(\frac{2}{3} n)$, thus establishing the theorem's claim.
The only reason the previous sentence, and the theorem's
statement,
exclude $n=1$ and start at $n=2$ is that
$\slog$ is defined only on reals greater than or equal to 1, and thus simply is not defined at $\frac{2}{3}n$ when $n = 1$.
\end{proof}

We now return to our study of nongappy sets, where the notion of
$\log^{\circledast}$ will play an important role.

\begin{definition}\label{def:polylog-nongappy}
  A nonempty set $S \subseteq \naturalnumberpositive$ is $\bigo(n \log n)$-nongappy if
    $(\exists f \in \bigo(n \log n))(\forall m \in S)(\exists m' \in
    S) [m'>m \land |m'| \leq f(|m|)]$.
\end{definition}

Definitions of $n^{(\log n)^k}$-nongappy for any constant $k \in \naturalnumberpositive$ and $2^n$-nongappy are provided
via Definition~\ref{d:nongappy-main}, since $n^{(\log n)^k}$ and $2^n$ are
each a single function, not a collection of
functions.\footnote{Note that $n^{(\log n)^k}$-nongappiness does not involve
evaluating $0^0$ even though it might at first seem to because Definition~\ref{d:nongappy-main},
which is used to define the notion, restricts the domain of ``$F$'' to $\realnumberpositive$, and because
$k$ is a positive natural number.}
Those two notions, along with the notion defined in
Definition~\ref{def:polylog-nongappy}, will be the focus of
Theorem~\ref{t:tradeoff}.  That theorem obtains the containments
related to those three notions of nongappiness.
As one would expect,
as the allowed gaps become larger
the corresponding $\up$ classes become
more restrictive in their ambiguity bounds.
\begin{theorem} \label{t:tradeoff} Let $T$ be a subset of
  $\naturalnumberpositive$.
  \begin{enumerate}
  \item\label{p:tradeoff0} If $T$ has
    an $\bigo(n\log n)$-nongappy, P-printable subset, then
    $\upleq{\bigo(\sqrt{\log n})} \subseteq \rcnb{T}$.
  \item\label{p:tradeoff2-new} For all $k \in \naturalnumberpositive$, if $T$ has an
    $n^{(\log n)^k}$-nongappy, P-printable subset, then
    $\upleq{\bigo(1) + \frac{1}{\ceiling{\log(k+1) + 1}}\log \log \log n} \subseteq
    \rcnb{T}$.\footnote{Earlier versions of this paper claimed that if $T$ has an $n^{(\log n)^{\bigo(1)}}$-nongappy (which is defined analogously to other notions of nongappiness involving big-Os, e.g., Def.~\ref{def: n+O(1)-nongappy}), P-printable subset then $\upleq{\bigo(1) + \frac{1}{3}\log^{[4]}(n)} \subseteq \rcnb{T}$ \cite[Theorem~4.19, Part~3]{hem-juv-nad-phi:c:ics} \cite[Theorem~4.23, Part~3]{hem-juv-nad-phi:t4:itc}, although those versions either pointed to or included a flawed proof.
    An anonymous ACT~TOCT referee both spotted the flaw and generously suggested a tighter inequality that, when used in the proof, would improve the result.
    By further tightening that inequality into an identity, we were able to prove the stronger result that appears here, namely, part~\ref{p:tradeoff2-new} of Theorem~\ref{t:tradeoff}.
    The current result implies the old statement because if a set $T$ has an $n^{(\log n)^{\bigo(1)}}$-nongappy, P-printable subset then there exists $k \in \naturalnumberpositive$ such that $T$ has an $n^{(\log n)^k}$-nongappy, P-printable subset, and because $\upleq{\bigo(1) + \frac{1}{3}\log^{[4]}(n)} \subseteq \upleq{\bigo(1) + C \log^{[3]}}(n)$ for any constant $C > 0$. The latter holds because for any $C$ there exists $N$ such that $C\log^{[3]}(n) \geq \frac{1}{3}\log^{[4]}(n)$ for all $n > N$, and so if a machine $M$ witnesses $L \in \upleq{\bigo(1) + \frac{1}{3}\log^{[4]}(n)}$ then the machine $M'$ that, on inputs of length at most $N$, memorizes whether to accept or reject, and, on inputs of length greater than $N$, simulates $M$, witnesses $L \in \upleq{\bigo(1) + C\log^{[3]}(n)}$.}
  \item\label{p:tradeoff3} If $T$ has a $2^n$-nongappy, P-printable
    subset $S$, then
    $\upleq{\max(1, \floor{\frac{ \log^{\circledast} n}{\lambda}})} \subseteq
    \rcnb{T}$ (and so certainly also
    $\upleq{\max(1, \floor{\frac{ \log^*(n) - \log^*(\log^*(n) + 1) - 1)}
      {\lambda}})} \subseteq \rcnb{T}$), where $\lambda = 4 + \min_{s \in S, |s| \geq 2}(|s|)$.
  \end{enumerate}
\end{theorem}

\begin{proof}
We prove each of the three parts of the theorem separately.
{\bf (Part~\ref{p:tradeoff0})}\quad
  Suppose $T \subseteq \naturalnumberpositive$ has an $\bigo(n\log n)$-nongappy, P-printable subset.
  It follows from the definition of $\bigo(n\log n)$-nongappy that
  there is some
  $k \in \naturalnumberpositive$ such that $T$ has a
  $kn \log n$-nongappy, P-printable subset. Set $F: \realnumberpositive \to \realnumberpositive$ to be $F(t) = kt \log t$.
  The conditions from
  Theorem~\ref{t:meta2} are satisfied by $F(t)$ as for all
  $t \geq 4$, $F(t) = kt \log t \geq t+2$ and
  $(\forall c \in \realnumberatleastone)$
  [$cF(t) = c kt \log t \leq c kt \log ct = F(ct)]$, and $F$ is
  nondecreasing on $\{t \in \realnumberpositive \condition t \geq 4\}$.
  Let $\lambda = 4+|s|$ where $s$ is the smallest element of the
  $kn\log n$-nongappy, P-printable subset of $T$ such that the conditions on $F$
  hold for all $t \geq |s|$, i.e.,
  $s$ is the smallest element of the
  $kn\log n$-nongappy, P-printable subset of $T$ such that $|s| \geq 4$.
  For every function $g: \naturalnumber \to \realnumberatleastone$ satisfying $g(n) = \bigo(\sqrt{\log n})$
  it is easy to see that there exists a number $d$ such that $(\forall n \in \naturalnumber) [g(n) \leq d(\sqrt{\log n}+1)]$ . Thus
  $\upleq{g(n)} \subseteq \upleq{d(\sqrt{\log n}+1)} = \upleq{\floor{d(\sqrt{\log n}+1)}}$. The function
  $j(n) = \floor{d(\sqrt{\log n}+1)}$ satisfies the conditions of
  Theorem~\ref{t:meta2} as $j(n)$ can be computed in time polynomial
  in the value $n$
  (since $\floor{d(\sqrt{\log n}+1)}$ can be computed by doing a linear search for the largest natural number $i$ such that $2^{i^2} \leq n^{d^2}$),
  and $j(n)$ has value at most polynomial in the
  value $n$.  Applying Theorem~\ref{t:meta2}, to prove that
  $\upleq{j(n)} \subseteq \rcnb{T}$ it suffices to show that there is
  some $\beta$ such that ${F^{[j(n)]}(j(n)\lambda)} = \bigo(n^\beta)$.

To this end, we show, via induction on $\ell$, that for all $\ell \in \naturalnumberpositive$ and real $t \geq 1$,
\begin{equation}\label{eq: 4.21_pt1_inequality}
    F^{[\ell]}(t)\leq \ell! k^\ell t [\log((\ell - 1)! k^{\ell - 1} t)]^\ell.
\end{equation}
The base case $\ell = 1$ is an equality since the right-hand side for $\ell = 1$ is exactly the definition of $F(t)$.
Assume that Eq.~\ref{eq: 4.21_pt1_inequality} holds for some $\ell \geq 1$. Then
\begin{align*}
    F^{[\ell + 1]}(t) &= k F^{[\ell]}(t) \log(F^{[\ell]}(t)) \\
	&\leq
	k(\ell! k^\ell t [\log((\ell -1)! k^{\ell - 1} t)]^\ell) \log(\ell! k^\ell t [\log((\ell -1)! k^{\ell - 1} t)]^\ell) \\
	&= \ell! k^{\ell +1}t [\log((\ell -1)! k^{\ell - 1} t)]^\ell (\log(\ell! k^\ell t) + \ell \log\log((\ell -1)! k^{\ell - 1} t)) \\
    &\leq \ell! k^{\ell+1} t [\log((\ell - 1)!k^{\ell-1}t)]^ \ell \cdot \log(\ell! k^\ell t) \\
	&\leq (\ell + 1)! k^{\ell + 1} t \log(\ell! k^\ell t)^{\ell + 1},
\end{align*}
closing the induction.

Applying Eq.~\ref{eq: 4.21_pt1_inequality} with $\ell = j(n)$ and $t = j(n)\lambda$ and using $j(n)! \leq j(n)^{j(n)}$ we get
\begin{equation} \label{eq:pt1_inequality_applied}
    F^{[j(n)]}(j(n) \lambda) \leq
    j(n)^{j(n)} k^{j(n)} j(n) \lambda [\log(j(n)^{j(n)} k^{j(n)} j(n) \lambda)]^{j(n)}.
\end{equation}
For sufficiently large $n$ we have $j(n) \leq C\sqrt{\log n}$ for some constant $C \in \naturalnumberpositive$ that depends on $d$. So we have
\begin{align*}
    j(n)^{j(n)} &\leq C^{C\sqrt{\log n}} \cdot (\log n)^{C\sqrt{\log n}} \\
    &\leq 2^{C\log C \cdot \sqrt{\log n}} \cdot 2^{\log\log n \cdot C \sqrt{\log n}}.
\end{align*}
For sufficiently large $n$, $\log\log n \leq \sqrt{\log n}$, so for large $n$ the second quantity in the multiplication is at most $n^C$.
The first quantity is at most $n^{C \log C}$ since $\sqrt{\log n} \leq \log n$.
Letting $C' = C\log C + C$ for convenience, for sufficiently large $n$, $j(n)^{j(n)} \leq n^{C'}$.
Since $k$ is a constant while $j$ tends to infinity with $n$, for large $n$, $k^{j(n)} \leq j(n)^{j(n)} \leq n^{C'}$.
Finally, since $C' \geq C \geq 1$, for all $n$, $\sqrt{\log n} \leq n^{C'}$.
Plugging everything into Eq.~\ref{eq:pt1_inequality_applied} we get that for all sufficiently large $n$,
\begin{align*}
    F^{[j(n)]}(j(n) \lambda) &\leq n^{2C'} \cdot C\lambda \sqrt{\log n} \cdot [\log(n^{2C'} \cdot C\lambda \sqrt{\log n})]^{C\sqrt{\log n}} \\
    &\leq C\lambda n^{3C'} \cdot [\log(C\lambda n^{3C'})]^{C\sqrt{\log n}} \\
    &= C\lambda n^{3C'} \cdot 2^{\log\log(C\lambda n^{3C'}) \cdot C \sqrt{\log n}}.
\end{align*}
Notice that for large $n$, $\log\log(C\lambda n^{3C'}) = \log(\log(C\lambda) + 3C'\log n) \leq \log(4C' \log n) = \log(4C') + \log\log n \leq 2\log\log n \leq \sqrt{\log n}$.
Thus we have that for large $n$,
\begin{align*}
    F^{[j(n)]}(j(n) \lambda) &\leq C\lambda n^{3C'} \cdot 2^{\sqrt{\log n} \cdot C \sqrt{\log n}} \\
    &= C\lambda n^{3C' + C}.
\end{align*}
Hence there exists a $\beta$ (namely, this constant $3C' + C$) such that $F^{[j(n)]}(j(n) \lambda) = \bigo(n^\beta)$.
Thus for any function $g: \naturalnumber \to \realnumberatleastone$ satisfying
  $g(n)=\bigo(\sqrt{\log n})$, there exists a function
  $j$ such that
  $\upleq{g(n)} \subseteq \upleq{j(n)} \subseteq \rcnb{T}$.

{\bf (Part~\ref{p:tradeoff2-new})}\quad
Fix $k \in \naturalnumberpositive$.
Suppose $T \subseteq \naturalnumberpositive$ has an $n^{(\log n)^k}$-nongappy, P-printable subset.
Set $F: \realnumberpositive \to \realnumberpositive$ to be $F(t) = t^{(\log t)^k}$.
The conditions from Theorem~\ref{t:meta2} are satisfied by $F$ as for all $t \geq 4$, $F(t) \geq t+2$ and
$(\forall c \in \realnumberatleastone)[cF(t) = c t^{(\log t)^k} \leq (ct)^{(\log (ct))^k} = F(ct)]$,
and $F$ is nondecreasing on $\{t \in \realnumberpositive \condition t \geq 4\}$.
Let $\lambda = 4+|s|$ where $s$ is the smallest element of the $n^{(\log n)^k}$-nongappy, P-printable subset of $T$ such that the conditions on $F$ hold for all $t \geq |s|$, i.e.,
$s$ is the smallest element of the $n^{(\log n)^k}$-nongappy, P-printable such that $|s| \geq 4$.
For every function $g: \naturalnumber \to \realnumberatleastone$ satisfying
$g(n) = \bigo(1) + \frac{1}{\ceiling{\log(k+1)+1}} \log \log \log n$ there exists $d \in \naturalnumberpositive$ such that
$g(n) \leq d + \frac{1}{\ceiling{\log(k+1)+1}} \log\log\log n$ and hence $\upleq{g(n)} \subseteq \upleq{d + \frac{1}{\ceiling{\log(k+1)+1}} \log\log\log n} = \upleq{\floor{d + \frac{1}{\ceiling{\log(k+1)+1}} \log\log\log n}}$.
The function $j(n) = \floor{d + \frac{1}{\ceiling{\log(k+1)+1}} \log \log \log n}$ can be computed in time polynomial in the value $n$
since $\floor{\frac{\log\log\log n}{\ceiling{\log(k+1)+1}}}$ can be computed by doing a linear search for the largest natural number $i$ such that $2^{2^{2^{\ceiling{\log(k+1)+1}i}}} \leq n$ (the ceiling can be hardcoded since $k$ is a constant).
Also, $j(n)$ has value at most polynomial in the value $n$.
So $j(n)$ satisfies the conditions of Theorem~\ref{t:meta2}.
Applying Theorem~\ref{t:meta2}, to prove that $\upleq{j(n)} \subseteq \rcnb{T}$ it suffices to show that for some $\beta \in \naturalnumber$, $F^{[j(n)]}(j(n) \lambda) = \bigo (n^{\beta})$.

We first show that for all $\ell \in \naturalnumberpositive$ and $t \in \realnumberpositive$, $F^{[\ell]}(t) = t^{(\log t)^{(k+1)^\ell-1}}$.
We do so by induction on $\ell$.
Notice that for all $t \in \realnumberpositive$, $F^{[1]}(t) = t^{(\log t)^k} = t^{(\log t)^{(k+1)^1-1}}$, so the claim holds for $\ell = 1$.
Assume for induction that the identity holds for some $\ell \geq 1$.
Fix some $t \in \realnumberpositive$ and let $t' = t^{(\log t)^{(k+1)^\ell-1}}$.
Using the inductive hypothesis, $F^{[\ell+1]}(t) = F(F^{[\ell]}(t)) = F(t') = t'^{(\log t')^k}$.
We have
$(\log t')^k = ((\log t)^{(k+1)^\ell-1} \log t)^k = (\log t)^{k(k+1)^\ell}$,
and thus
\begin{equation} \label{eq:iter-ineq-pre-binomial}
    F^{[\ell+1]}(t) = (t^{(\log t)^{(k+1)^\ell-1}})^{(\log t)^{k(k+1)^\ell}} = t^{(\log t)^{k(k+1)^\ell + (k+1)^\ell - 1}}.
\end{equation}
Using the binomial theorem,
\begin{align*}
    k(k+1)^\ell + (k+1)^\ell - 1 &= -1 + \sum_{0 \leq i \leq \ell} {\ell \choose i} k^{i+1} + \sum_{0 \leq i \leq \ell} {\ell \choose i} k^i \\
    &= -1 + \sum_{1 \leq i \leq \ell+1} {\ell \choose i-1} k^i + \sum_{0 \leq i \leq \ell} {\ell \choose i} k^i \tag{by reindexing} \\
    &= -1 + k^{\ell+1} + 1 + \sum_{1 \leq i \leq \ell} \left[{\ell \choose i-1} + {\ell \choose i}\right] k^i \\
    &= k^{\ell + 1} + \sum_{1 \leq i \leq \ell} {\ell + 1 \choose i} k^i \\
    &= (k+1)^{\ell+1} - 1,
\end{align*}
which, when substituted back into Eq.~\ref{eq:iter-ineq-pre-binomial}, gives us $F^{[\ell+1]}(t) = t^{(\log t)^{(k+1)^{\ell+1}-1}}$, completing the induction.

We now use this identity to prove that there exists $\beta$ such that $F^{[j(n)]}(j(n)\lambda) = \bigo(n^\beta)$.
For convenience, let $t_n = j(n)\lambda$.
Notice that since for all $n$, $j(n)\lambda > 0$ and $j(n) \in \naturalnumberpositive$, we can apply the identity we just proved to get $F^{[j(n)]}(t_n) = t_n^{(\log t_n)^{(k+1)^{j(n)}-1}}$.
To complete the proof, it suffices to show that there is a constant $\beta$ such that for sufficiently large $n$, the expression on the right-hand side is at most $n^\beta$.
Taking the log of both sides twice, it suffices to show that there exists a constant $\beta$ such that for large enough $n$, $(k+1)^{j(n)}\log\log(t_n) \leq \log\log n + \log \beta$.
Plugging in the definitions of $j$ and $t_n$,
\begin{align*}
    (k+1)^{j(n)}\log\log(t_n) &\leq (k+1)^{d + \frac{1}{\ceiling{\log(k+1)+1}}\log^{[3]}(n)} \cdot \log\log(d\lambda + \frac{d}{\ceiling{\log(k+1)+1}}\log^{[3]}(n)) \\
    &= (k+1)^d \cdot 2^{\frac{\log(k+1)}{\ceiling{\log(k+1)+1}}\log^{[3]}(n)} \cdot \log\log(d\lambda + \frac{d}{\ceiling{\log(k+1)+1}}\log^{[3]}(n)).
\end{align*}
It is easy to see that for large enough $n$, $\log\log(d\lambda + \frac{d}{\ceiling{\log(k+2)+1}}\log^{[3]}(n)) \leq 2\log^{[5]}(n)$
(asymptotically, the leading-order term is $\log^{[5]}(n)$ on the left and $2\log^{[5]}(n)$ on the right).
For convenience, let $\epsilon = 1 - \frac{\log(k+1)}{\ceiling{\log(k+1)+1}}$.
We have that for sufficiently large $n$,
\begin{align*}
    (k+1)^{j(n)}\log\log(t_n) &\leq (k+1)^d \cdot (\log\log n)^{\frac{\log(k+1)}{\ceiling{\log(k+1)+1}}} \cdot 2\log^{[5]}(n),
    \\
    &=  2(k+1)^d \cdot (\log\log n)^{1-\epsilon} \cdot \log^{[5]}(n). %
\end{align*}
For large $n$, $\log^{[5]}(n) \leq (\log\log n)^{\epsilon/2}$, and so the expression above is bounded above by $C (\log\log n)^{1-\epsilon/2}$ where $C$ is a constant that depends on $k$ and $d$.
For all $\beta \geq 1$ and for sufficiently large $n$, this quantity is at most $\log\log n + \log \beta$.
Thus there exists a $\beta$ (namely, any $\beta \geq 1$) such that $F^{[j(n)]}(j(n)\lambda) = \bigo(n^\beta)$.

Putting everything together, we have showed that for every $g : \naturalnumber \to \realnumberatleastone$ satisfying $g(n) = \bigo(1) + \frac{1}{\ceiling{\log(k+1)+1}} \log\log\log n$ there exists a function $j$ such that $\upleq{g(n)} \subseteq \upleq{j(n)} \subseteq \rcnb{T}$.

{\bf (Part~\ref{p:tradeoff3})}\quad
  Suppose $T \subseteq \naturalnumberpositive$ has a $2^n$-nongappy, P-printable subset $S$.
  Set $F: \realnumberpositive \to \realnumberpositive$ to be $F(t) = 2^t$.
  The conditions from
  Theorem~\ref{t:meta2} are satisfied by $F(t)$ as for all
  $t \geq 2$, $F(t) \geq t+2$ and
  $(\forall c \in \realnumberatleastone)$
  [$cF(t) = c \cdot 2^t \leq 2^{ct} = F(ct)]$, and $F$ is
  nondecreasing on $\{t \in \realnumberpositive \condition t \geq 2\}$.
  Let $\lambda$ be as defined in the theorem statement, i.e., $\lambda = 4 + \min_{s \in S, |s| \geq 2}(|s|)$.
  Notice that this is equal to $4+|s|$ where $s$ is the smallest element of $S$ where the conditions on $F$ hold, and so $\lambda$ is as in Theorem~\ref{t:meta2}.
  Let $j: \naturalnumber \to \naturalnumberpositive$ be
  $j(n) = \max(1, \floor{\lambda^{-1} \log^{\circledast} (n)})$.
  Since $\log^{\circledast}$ can be computed in polynomial time, the function $j(n)$
  can be computed in time at most polynomial in the value $n$ and also will have
  value at most polynomial in the value $n$.
  Applying Theorem~\ref{t:meta2}, to show that $\upleq{\max(1, \floor{\frac{\log^{\circledast}n}{\lambda}})} \subseteq \rcnb{T}$ it is enough to show that
  $F^{[j(n)]}(j(n)\lambda) = \bigo(n)$.
  It suffices to show that for all sufficiently large $n$, $F^{[j(n)]}(j(n)\lambda) \leq n$.
  Since $\log^\circledast(n) \to \infty$ as $n \to \infty$, for large enough $n$ we have $\lambda^{-1} \log^\circledast(n) \geq 1$ and hence $j(n) = \floor{\lambda^{-1} \log^\circledast(n)}$.
  Thus for sufficiently large $n$,
  \begin{equation*}
    F^{[j(n)]}(j(n) \lambda) \leq {\underbrace{2^{2^{\iddots^{2}}}}_{\textrm{$j(n)$}}}^{^{\log^{\circledast}(n)}}
    \leq {\underbrace{2^{2^{\iddots^{2}}}}_{\textrm{$j(n)$}}}^{^{\log^{[\log^{\circledast} (n)]}(n)}}
    \leq {\underbrace{2^{2^{\iddots^{2}}}}_{\textrm{$\log^\circledast (n)$}}}^{^{\log^{[\log^{\circledast} (n)]}(n)}}
    = n,
  \end{equation*}
  which finishes the proof.
\end{proof}

\section{Conclusions and Open Problems}
This paper applied and adapted the iterative constant-setting
technique used by Cai and Hemachandra~\cite{cai-hem:j:parity} and
Borchert, Hemaspaandra, and Rothe~\cite{bor-hem-rot:j:powers-of-two}
to a more general setting.  In particular, we generalized Borchert,
Hemaspaandra, and Rothe's notion of ``nongappiness,'' proved two
flexible metatheorems that can be used to obtain containments of
ambiguity-limited classes in restricted counting classes, and applied
those theorems to prove containments for some of the most natural
ambiguity-limited classes.  We also noted the apparent trade-off
between the nongappiness of the targets used in iterative
constant-setting, and the nondeterministic ambiguity of the classes
one can capture using those targets.
For example, beyond the containments we explicitly derived with
Theorems~\ref{t:meta1} and~\ref{t:meta2}, those two meta\-theorems
themselves seem to reflect a
trade-off between the ambiguity allowed in
an ambiguity-limited class and the smallness of gaps in a set of
natural numbers defining a restricted counting class.  One open
problem is to make explicit, in a smooth and complete fashion, this
trade-off between gaps and ambiguity.
Another open problem is to capture the relationship
between $\log^\circledast$ and $\log^*$ more tightly
than Theorems~\ref{t:ilog} and~\ref{t: log i.o.} do.

One last related open research direction is to
further study nongappy,
P-printable subsets of the primes.  We noted two sufficient conditions
for showing the existence of P-printable subsets of primes, namely
Allender's hypothesis about the probabilistic complexity class
$\mathrm{RP}$~\cite[Corollary 32 and the comment following it]{all:j:pseudorandom}
and the Lenstra--Pomerance--Wagstaff
Conjecture~\cite{pom:j:primality-testing,wag:j:mersenne-numbers}.
Furthermore, we proved that
the
Lenstra--Pomerance--Wagstaff Conjecture in fact
implies that for all
$\epsilon>0$ the primes have an $n^{1+\epsilon}$-nongappy, P-printable
subset. While finding a P-printable subset of the primes would itself
be interesting, we have shown how it would also be a useful step
towards understanding the restricted counting class defined by the
primes, namely, if one could find a suitably nongappy such set.

\paragraph{Acknowledgments}  An anonymous ACM~TOCT referee found, and in many cases suggested corrections for, flaws in a number of our proofs, and that referee also gave us proposed argument lines and suggested strengthenings that we have followed, e.g., part~\ref{p:tradeoff2-new} of Theorem~\ref{t:tradeoff} in its current form is stronger than in previous versions of the paper.
We are deeply grateful to
the anonymous
ACM~TOCT and MFCS referees,
and to 
Eric Allender, Benjamin Carleton, Kenneth Regan, and Henry Welles,
for helpful comments, corrections, information, and improvements.

\appendix
\section*{Appendix A:
Deferred Proof of Theorem~\protect\ref{t:up-to-k}}\label{a-now-nonsub-section-NOW-HACKED-BY-HAND-AS-TO-LABEL:up-to-k}

We now briefly give the simple proof of Theorem~\ref{t:up-to-k}. We
assume that the reader has already read the less simple proof of
Theorem~\ref{t:meta1}, and thus has seen that proof's  use of iterative
constant-setting.

\begin{proof}[Proof of Theorem~\ref{t:up-to-k}]
  Let $L$ be a language in $\upleq{k}$, witnessed by a machine
  $\hat{N}$.  To show $L \in \rcnb{T}$ we give a description of a NPTM
  $N$ that on every input $x$ has $\acc_N(x) \in T$ if $x \in L$ and
  $\acc_N(x) = 0$ if $x \notin L$.

  On input $x$, $N$ nondeterministically guesses an integer
  $i \in \{1,2,\ldots,k\}$, and then nondeterministically guesses
  a cardinality-$i$
  set of paths of $\hat{N}(x)$.  If all the paths guessed in a
  cardinality-$i$
  set are accepting paths, then $N$ branches into $c_i$ accepting
  paths, where the constants $c_i$ are as defined below.  Note that
  unlike the proof of Theorem~\ref{t:meta1} these constants
  $c_1, \ldots, c_k$
  do not have to be computed on the fly by $N$, but rather are
  hard-coded into $N$, so we do not need $T$ to be P-printable.

  Set $c_1$ to be the least element of $T$. Iteratively
  set, in this order, $c_2$, $c_3$,~\dots, $c_k$ as follows.  Given
  $c_1, \ldots, c_{i-1}$ set
  $b_i = \sum_{1 \leq \ell \leq i-1} c_{\ell} {i \choose \ell}$.
  Then let $a_i$ be the least element of $T$ such that $a_i \geq b_i$,
  and set $c_i = a_i - b_i$.  Our description of machine $N$ is
  complete.

  Similarly to the proof of Theorem~\ref{t:meta1}, we have set $c_i$
  to ensure that $\acc_N(x) \in T$ if $\hat{N}(x)$ accepts and
  $\acc_N(x) = 0$ if $\hat{N}(x)$ rejects.
  It is clear from the construction---keeping in mind that $\hat{N}$
  runs in nondeterministic polynomial time and the $c_i$ each will be
  fixed constants---that
  $N$ is an NPTM\@.
\end{proof}%
%
%
%

%
%\bibliography{gry-reu}
\newcommand{\etalchar}[1]{$^{#1}$}

\end{document}